\documentclass[runningheads]{llncs}
\usepackage{algorithm}
\usepackage[noend]{algpseudocode}
\usepackage{amsmath}
\usepackage{amssymb}
\usepackage{gensymb}
\usepackage{graphicx}
\usepackage{mathtools}
\usepackage[section]{placeins}
\usepackage{subcaption}
\DeclarePairedDelimiter{\ceil}{\lceil}{\rceil}

\makeatletter
\def\BState{\State\hskip-\ALG@thistlm}
\floatname{algorithm}{Procedure}
\makeatother

\begin{document}

\title{Centralised Connectivity-Preserving Transformations for Programmable Matter: A Minimal Seed Approach}

\titlerunning{Connectivity-Preserving Transformations for Programmable Matter}

\author{Matthew Connor \and
Othon Michail \and
Igor Potapov}

\authorrunning{M. Connor et al.}

\institute{Department of Computer Science, University of Liverpool, UK\\
\email{M.Connor3@liverpool.ac.uk, Othon.Michail@liverpool.ac.uk, potapov@liverpool.ac.uk}\\
}

\maketitle

\begin{abstract}
We study a model of programmable matter systems consisting of $n$ devices lying on a 2-dimensional square grid which are able to perform the minimal mechanical operation of rotating around each other. The goal is to transform an initial shape A into a target shape B. We investigate the class of shapes which can be constructed in such a scenario under the additional constraint of maintaining global connectivity at all times. We focus on the scenario of transforming nice shapes, a class of shapes consisting of a central line $L$ where for all nodes $u$ in $S$ either $u \in L$ or $u$ is connected to $L$ by a line of nodes perpendicular to $L$. We prove that by introducing a minimal 3-node seed it is possible for the canonical shape of a line of $n$ nodes to be transformed into a nice shape of $n-1$ nodes. We use this to show that a 4-node seed enables the transformation of nice shapes of size $n$ into any other nice shape of size $n$ in $O(n^2)$ time. We leave as an open problem the expansion of the class of shapes which can be constructed using such a seed to include those derived from nice shapes.\\

\keywords{Programmable matter \and Transformation \and Reconfigurable robotics \and Shape formation \and Centralised algorithms.}
\end{abstract}

\section{Introduction}\label{sec1}

Programmable matter refers to matter which can change its physical properties algorithmically.
This means that the change is the result following the procedure of an underlying program.
The implementation of the program can either be a system level external centralised algorithm or an internal decentralised algorithm executed by the material itself.
The model for such systems can be further refined to specify properties that are relevant to real-world applications, for example connectivity, colour \cite{CLSL11} and other physical properties.
The result of this is the development of programmable materials such as self-assembling DNA molecules \cite{D12,R06}.
In addition, systems which rely on large collectives of identical robots have been developed, for example the Kilobot system \cite{RCN14} and the Robot Pebbles system \cite{GKR10}.
Another interesting implementation is Millimotein \cite{KCLO12}, a system where programmable matter folds itself into arbitrary 3D shapes.
The CATOMS system \cite{TPB19,TPB20} is a further implementation which constructs 3D shapes by first creating a ``scaffolding structure" as a basis for construction.
It is expected that applications in further domains such as molecular computers and self-repairing machines may become apparent in the long-term.

As the development of these systems continues, it becomes increasingly necessary to develop theoretical models which are capable of describing and explaining the emergent properties, possibilities and limitations of such systems in an abstract and fundamental manner. To this end, models have been developed for programmable matter. For example, algorithmic self-assembly \cite{D12} focuses on programming molecules like DNA to grow in a controllable way, and the Abstract Tile Assembly Model \cite{RW00,W98}, as well as the nubot model \cite{WCGD13}, have both been developed for this area. Network Constructors \cite{MS16} uses the Population Protocol model \cite{AADFP06} based on a population of finite-automata interacting randomly as the basis for a new model where the automata are able to create networks by forming connections with each other. The latter model is formally equivalent to a restricted version of chemical reaction networks, which ''are widely used to describe information processing occurring in natural cellular regulatory networks'' \cite{SCWB08,D13}.
Finally there is extensive research into the amoebot model \cite{DDGR14,DDGP18,DGRS15,DGRS16}, where finite automata on a triangle lattice follows a distributed algorithm to achieve a desired goal.

Recent progress in this direction has been made in a previous paper \cite{AMP20}, covering questions related to a specific model of programmable matter where nodes exist in the form of a shape on a 2D grid and are capable of performing two specific movements: rotation around each other and sliding a node across two other nodes. They presented 3 problems: transformations with only rotations (Rot-Transformability), transformations with rotations with the restriction that shapes must always remain connected (RotC-Transformability) and transformations with both rotation and sliding movements (RS-Transformability). For Rot-Transformability they prove universal transformation between any pair of colour-consistent shapes which are not blocked, however they leave universal RotC-Transformability as an open problem. Such transformations are highly desirable due to the large numbers of programmable matter systems which rely on the preservation of connectivity. Progress in a very similar direction was made in another paper \cite{AADD21}, which used a similar model but allowed for a greater range of movement, for example ``leapfrog" and ``monkey" movements. They accomplished universal transformation in $O(n^2)$ movements using a ``bridging" procedure which added up to 5 nodes during the procedure as necessary in a manner similar to the seed idea from the previous paper.

\section{Contribution}\label{sec2}

We investigate which families of connected shapes can be transformed into each other via rotation movements without breaking connectivity.

We consider the case of programmable matter on a 2D grid which is only capable of performing rotation movements, defined as the $90\degree$ rotation of a node $a$ around one of the two vertices of the edge it shares with a neighbouring edge-adjacent node $b$, so long as the goal and intermediate cells are empty. All nodes must be \emph{edge connected}, meaning that at every time step there must be a path from any arbitrary node to any other node crossing only spaces occupied by nodes via edges. Our algorithms are \emph{centralised}, using external procedures to transform shapes, and therefore focusing on the questions of the feasibility and complexity of the transformations.

We assume the existence of a \emph{seed}, a group of nodes in an arbitrary shape $S$ which are placed in empty cells neighbouring a shape $A$ to create a new connected shape which is the unification of $S$ and $A$.
Seeds allow shapes which are blocked or incapable of meaningful movement to perform otherwise impossible transformations.
The use of seeds was established in a previous work \cite{MSS19}, and more recently shown to enable universal reconfiguration in the context of connectivity preserving transformations \cite{AADD21}, however to our knowledge there has been no attempt to investigate this problem using a seed which is a connected shape fully introduced before the transformation is initiated.

We first study blocked shapes, where our goal is to define the class of shapes which are \emph{blocked}, or incapable of moving any node without a seed.
We show that shapes of this class consist of nodes which are surrounded by diagonal lines in the shape of a rhombus, or overlapping rhombuses which may be connected by lines.
We then investigate the transformation of nice shapes.
A \emph{Nice Shape} (defined in \cite{AMP20}) is a shape $S$ which has a central line $L$ where for all nodes $u$ in $S$ either $u \in L$ or $u$ is connected to $L$ by a line of nodes perpendicular to $L$.
We provide a lower bound of $Omega(n^2)$ for transforming a line of $n$ nodes into a nice shape.
We show that it is possible to transform such a line into a nice shape of $n-1$ nodes using a 3-node seed in $O(n^2)$ time.
We then demonstrate that it is possible to transform nice shapes of size $n$ into other nice shapes of size $n$ by using the canonical shape of a line and a 4-node seed in $O(n^2)$ time.
We provide an algorithm to implement this transformation and give time bounds for it.
We then provide further directions for research.

In Section \ref{sec3}, we formally define the model of connectivity-preserving programmable matter used in this paper.
In Section \ref{sec4} we give our lower bounds.
In Section \ref{sec5} we provide our algorithm  for the construction of nice shapes where the colour of nodes added to each side of the line always alternates, then generalise first to all nice shapes and second to the class of shapes made up of nice shapes.
In Section \ref{sec6} we conclude and give directions for potential future research.

\section{Model}\label{sec3}
The programmable matter systems considered in this paper operate on a 2D square grid, with each cell being uniquely referred to by its $y\geq0$ and $x\geq0$ coordinates. Such a system consists of a set $S$ of $n$ modules, called nodes throughout. Each node may be viewed as a spherical module fitting inside a cell. At any given time, each node $u \in V$ occupies a cell in the grid $o(u) = (o_y(u), o_x(u)) = (i, j)$ (where $i$ corresponds to a row and $j$ to a column of the grid) and no two nodes may occupy the same cell. At any given time $t$, the positioning of nodes on the grid defines an undirected neighboring relation $E(t) \subset S \times S$, where $\{u, v\} \in E$ iff $o_y(u) = o_y(v)$ and $|o_x(u) - o_x(v)| = 1$ or $o_x(u) = o_x(v)$ and $|o_y(u) - o_y(v)| = 1$, that is, if $u$ and $v$ are either horizontal or vertical neighbors on the grid, respectively. A more informative and convenient way to define the system at any time $t$ is the mapping $P_t:\mathbb{N}_{\geq 0} \times \mathbb{N}_{\geq 0} \rightarrow \{0, 1\}$.where $P_t(i, j) = 1$ iff cell (i, j) is occupied by a node. At any given time $t$, $P_{t}^{-1}(1)$ defines a shape. Such a shape is called \emph{connected} if $(V, E(t))$ defines a connected graph. 

In general, shapes can \emph{transform} to other shapes via a sequence of one or more movements of individual nodes. We consider only one type of movement: rotation. In this movement, a single node moves relative to one or more neighboring nodes. A single rotation movement of a node $u$ is a 90° rotation of $u$ around one of its neighbors. Let $(i, j)$ be the current position of $u$ and let its neighbor be $v$ occupying the cell $(i-1, j)$ (i.e., lying below $u$). Then $u$ can rotate 90° clockwise (counterclockwise) around $v$ iff the cells $(i, j + 1)$ and $(i-1, j + 1)$ ($(i, j-1)$ and $(i-1, j-1)$, respectively) are both empty. By rotating the whole system 90°, 180°, and 270°, all possible rotation movements are defined analogously.

Let $A$ and $B$ be two connected shapes. We say that $A$ transforms to $B$ via a rotation $r$, denoted $A \stackrel{r}{\rightarrow} B$, if there is a node $u$ in $A$ such that if $u$ applies $r$, then the shape resulting after the rotation is $B$. We say that $A$ transforms in one step to $B$ (or that $B$ is reachable in one step from $A$), denoted $A \rightarrow B$, if $A \stackrel{r}{\rightarrow} B$ for some rotation $r$. We say that $A$ transforms to $B$ (or that $B$ is reachable from $A$) and write $A \rightsquigarrow B$, if there is a sequence of shapes $A = C_0, C_1, . . . , C_t = B$, such that $C_i \rightarrow C_{i+1}$ for all $0 \leq i < t$. Rotation is a reversible movement, a fact that we use in our results.

A line is a connected shape where every node lies on the same column or the same row.
A nice shape $N$ is defined as a shape which has a central line $L$ where for all nodes $u$ either $u \in L$ or $u$ is connected to $L$ by a line of nodes perpendicular to $L$.

Consider a black and red checkered colouring of the 2D grid, like that of a chessboard.
Then any shape $S$ consists of $b(S)$ nodes which lie on black cells and $r(S)$ nodes which lie on red cells.
Two shapes $A$ and $B$ are \emph{colour consistent} if $b(A) = b(B)$ and $r(A) = r(B)$.
If $S$ is not a nice shape and $S = A \cup B$ where $A$ is a nice shape, we call $B$ the \emph{waste} of the shape $S$.
A \emph{configuration} of a shape is an arrangement of the nodes of the shape on a 2D grid where each node is uniquely identifiable.

\section{Infeasible Transformations and the Time Lower Bound}\label{sec4}

In this section, we cover a series of transformations which are infeasible, meaning that they rely on the ability to move $O(n)$ nodes but exist in a scenario where moving at most $O(1)$ is possible.
We first define the class of shapes which are \emph{blocked}, meaning there is no potential movement available for any node.
We then define the class of \emph{k-blocked shapes}, where the set of potential transformations has at most $k$ configurations before any configuration is repeated.
We show that it is necessary for a seed to have at least 3 nodes if it is to be connected and to enable the movement of more than 5 nodes in a horizontal line.
Finally, we provide a lower bound of $\Omega(n^2)$ movements for the problem of transforming a line into a nice shape.

A node $w$ is an \emph{interior} node if for each of the cells $x$ edge-adjacent to $w$ either there is a node occupying $x$ or there are two nodes $y$ and $z$ such that $y$ and $z$ are edge-adjacent to $x$ and vertex-adjacent to $w$.
A node is an \emph{exterior} node if it is not an interior node.

\begin{theorem}
An arbitrary shape \emph{A} is blocked if and only if there is only 1 node or every exterior node has no edge connections to any other exterior node.
\end{theorem}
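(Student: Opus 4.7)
The plan is to prove each direction of the biconditional separately.

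The easier direction — that if $A$ is not blocked then $n \geq 2$ and some two exterior nodes are edge-adjacent — follows directly from the rotation rule. Take any valid rotation: say node $u$ at $(i, j)$ rotates $90\degree$ clockwise about its neighbor $v = (i-1, j)$. The rule requires both $(i, j+1)$ and $(i-1, j+1)$ to be empty. Then $u$ is exterior, witnessed by the empty edge-adjacent cell $(i, j+1)$ and the empty vertex-adjacent corner $(i-1, j+1)$; symmetrically $v$ is exterior via $(i-1, j+1)$ and corner $(i, j+1)$. Since $u$ and $v$ are edge-adjacent, the required pair exists, and $n \geq 2$ is forced because a single node has no neighbor and hence no available rotation.

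The converse — if $n \geq 2$ and two exterior nodes are edge-adjacent then $A$ is unblocked — is the harder direction. Let $u, v$ be such a pair; by a global grid rotation assume $v = (0, 1)$ lies east of $u = (0, 0)$, and label the six flanking cells $a = (-1, 0)$, $a' = (-1, 1)$, $b = (1, 0)$, $b' = (1, 1)$, $c = (0, -1)$, $c' = (0, 2)$. Two \emph{swap} rotations are immediate: if both $a, a'$ are empty then $u$ rotates about $v$ into $a'$, and analogously for the row below. So assume neither swap is available. A \emph{witness} of $u$'s exteriority is an empty edge-adjacent cell $x$ with an empty vertex-adjacent corner $y$; the \emph{pivot} is the other edge-adjacent cell of $u$ sharing corner $y$. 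If the pivot is a node then $u$ rotates about it into $y$ via $x$, and we are done. Otherwise $u$ has an empty $2\times 2$ wedge; since $v$ occupies the east side, $u$'s wedge must lie in its NW or SW quadrant, and symmetrically $v$'s wedge must lie in NE or SE.

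Of the four wedge combinations, the two aligned cases $(NW,NE)$ and $(SW,SE)$ force the row above (resp. below) $\{u, v\}$ to be empty at the critical cells, triggering a swap rotation. The remaining, crossed cases $(NW, SE)$ and $(SW, NE)$ are the main obstacle. In the $(NW, SE)$ case one derives that $a'$ and $b$ must be nodes (otherwise a swap reappears), and that either $u$ rotates about $b$ into $(1, -1)$ or $v$ rotates about $a'$ into $(-1, 2)$; if both rotations are blocked then $(1, -1)$ and $(-1, 2)$ are themselves forced nodes. To resolve the crossed case I plan to choose $(u, v)$ to be the lexicographically smallest edge-adjacent exterior pair (smallest row first, then smallest column). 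Then the edge-adjacent nodes $a' = (-1, 1)$ and $(-1, 2)$ lie one row above $(u, v)$: if both are exterior, this is a lex-smaller exterior-adjacent pair contradicting extremality, and otherwise one of them is interior, which forces further nodes strictly higher in the grid. Since $A$ is finite, the cascade must terminate — at the topmost row the above-cells are necessarily empty, which forces an aligned-wedge configuration admitting a swap rotation. This contradiction completes the proof.
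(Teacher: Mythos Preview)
Your forward direction is correct and cleaner than the paper's; the paper argues the contrapositive (interior nodes, and exterior nodes whose only neighbours are interior, are each individually immobile), which amounts to the same thing.

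For the converse, the paper offers only the one-line assertion that ``the exterior node can rotate into the empty space which [the other] provides''. Taken literally this does not hold: the witness for $v$'s exteriority may lie entirely on the side of $v$ away from $u$, giving $u$ nothing to rotate into. Your wedge/pivot decomposition is the right repair, and it is correct through the aligned-wedge cases; in this respect your argument already goes well beyond the paper's.

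The crossed-wedge case is where your plan has a real gap. First, the global rotation you apply to make the pair horizontal does not commute with the lexicographic order you later impose, so ``lex-smallest'' is not well-defined across the two steps; you should fix the order first and then handle horizontal and vertical pairs by symmetric arguments. Second, and more importantly, the forced nodes $a'=(-1,1)$ and $(-1,2)$ in row $-1$ are not known to be exterior: for instance $a'$ has its west neighbour $a$ empty, but its south-west diagonal $u$ is occupied, so $a'$ is exterior via the west side only if $(-2,0)$ is also empty---which you have not established. Saying ``one of them is interior, which forces further nodes strictly higher'' is a gesture, not a proof: interiority of $a'$ forces $(-2,0)$ to be occupied, but a single new occupied cell does not by itself produce a new exterior-adjacent pair on which to recurse. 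You need a concrete invariant (for example, a staircase of forced occupied cells together with forced empty cells flanking it) that propagates upward and eventually collides with the finite top boundary of $A$, at which point the empty row above yields a swap. Until that invariant is written down, the crossed case---and hence this direction---is not closed.
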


\begin{proof}
A shape with one node is trivially blocked because there is nothing for it to rotate around.

Otherwise, a shape consists of interior nodes connected to each other with the possibility of one-node gaps, surrounded by exterior nodes which form diagonal lines due to the edge-adjacency restriction.

Interior nodes are blocked by the nodes that surround them, either because the grid space is filled by an edge-adjacent node or the two vertex-adjacent nodes block the rotation movement.

Exterior nodes can only rotate around nodes which are edge-connected, which must be interior nodes. The nodes which surround an interior node, whether edge or vertex connected, always block an exterior node from moving, regardless of whether they are interior or exterior nodes themselves.

Conversely, if there is an exterior node which is edge-connected to an exterior node, the exterior node can rotate into the empty space which it provides.
\qed
\end{proof}

This creates a shape which is similar to the trapezoid which is known to be blocked, however the key difference is that there are not necessarily four sides. The resulting shapes therefore resemble multiple trapezoids which overlap each other. For examples, see Figure 1.
In addition, with the additional condition of connectivity preservation, it is possible for these trapezoids to be connected by straight lines resembling a geometric cactus form of a cactus graph with rhombuses instead of cycles.
We prove this below.

\begin{theorem}
An arbitrary blocked shape \emph{A} is also blocked under the condition of connectivity preservation if there are lines which end in blocked shapes.
\end{theorem}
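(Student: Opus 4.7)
The plan is to fix the cactus-like decomposition of $A$ suggested by the paragraph preceding the theorem and then carry out a short case analysis over the role each node plays in this decomposition. Concretely, I would write $A$ as a union of its maximal blocked sub-shapes $B_1,\ldots,B_k$ (the ``rhombuses/trapezoids'' of Theorem 1) together with maximal straight lines $L_1,\ldots,L_m$ whose two ends are edge-adjacent to exterior nodes of two (possibly equal) $B_i$'s; I call those boundary nodes of the $B_i$'s the \emph{attachment points}. Every node of $A$ falls into exactly one of three classes: (i) a sub-shape node that is not an attachment point, (ii) an attachment point, (iii) a line node.

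For class (i) I would appeal directly to Theorem 1. The Chebyshev-radius-$2$ neighbourhood of such a node coincides with its neighbourhood inside the isolated $B_i$, because all attachment points and all line nodes lie strictly outside that neighbourhood. Hence every candidate rotation is already blocked before connectivity is even considered. For classes (ii) and (iii) Theorem 1 is not available, since line nodes are exterior nodes that \emph{do} have edge-adjacent exterior neighbours. The key observation is that the cactus structure makes every line $L_j$ a bridge of $A$, so every node in these two classes is a cut vertex: deleting $u$ splits $A$ into at least two connected components $C_1$ and $C_2$ (the two pieces on either side of $u$ along its line, each glued to whatever sub-shape it is attached to). Any single $90\degree$ rotation sends $u$ to a cell that is edge-adjacent only to the chosen pivot. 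I would enumerate the at most four such destinations (two per neighbour) and check in each case that the destination touches no node of the component that does not contain the pivot, so the rotated shape is disconnected.

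The main obstacle will be the attachment-point case, where $u$ has neighbours both inside a sub-shape and inside a line, giving several pivot choices and an irregular local geometry that depends on which corner of the rhombus $u$ sits at. For pivots that lie inside the sub-shape I would use the ``interior plus diagonal exterior'' description of blocked shapes from Theorem 1 to show that the candidate destination is either already occupied (so the rotation is not even admissible) or is separated from the line by a full unit cell; for the pivot that is the line endpoint, I would observe that the destination lies perpendicular to the line in the half-plane opposite the sub-shape and is therefore out of reach of every sub-shape node. Once these orientations have been worked through, class (ii) is closed and the proof is complete.
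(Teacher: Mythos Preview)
Your plan is essentially the paper's own argument, carried out in more detail. The paper's proof also decomposes $A$ into blocked sub-shapes and connecting lines, appeals to Theorem~1 for the sub-shape nodes, and observes that every line node---as well as the one sub-shape node that acquires a new pivot from the line---is needed for connectivity and therefore cannot rotate. The only organisational difference is that the paper absorbs the line's end nodes into the blocked sub-shape (``the end nodes of a line are blocked as they form part of the blocked shape''), which lets it dispose of your class~(ii) in one sentence rather than by a pivot-by-pivot enumeration.

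One technical slip to fix when you write it out: your class-(i) justification claims that attachment points and line nodes lie strictly outside the Chebyshev-radius-$2$ ball of every non-attachment sub-shape node. That is false---a sub-shape node diagonally adjacent to an attachment point can have the first line node at distance $\le 2$. What you actually need is weaker: a non-attachment node of $B_i$ has no \emph{edge}-neighbour among the line nodes, so its set of pivots is unchanged, and the added line cells can only obstruct rotations, never enable them. With that correction the argument goes through and matches the paper.
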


\begin{proof}
By Theorem 1, each of the blocked shapes is incapable of movement.
Lines are capable of movement, however if any node except the end nodes moves the resulting shape will not be connected.
In addition, the nodes within a line do not enable the movement of nodes in a blocked shape as the only node which is capable of new movement is responsible for maintaining connectivity between the blocked shape and the line.
Therefore, so long as the end nodes of a line are blocked as they form part of the blocked shape, the nodes of such a shape cannot be moved without breaking connectivity.
\qed
\end{proof}

\begin{figure}
\centering
\begin{subfigure}{.4\textwidth}
  \centering
  \includegraphics[width=.8\linewidth]{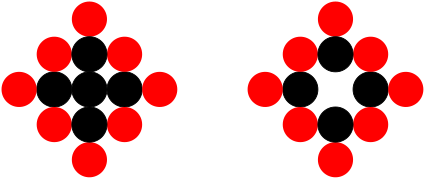}
  \label{fig:sub1}
\end{subfigure}
\begin{subfigure}{.4\textwidth}
  \centering
  \includegraphics[width=.8\linewidth]{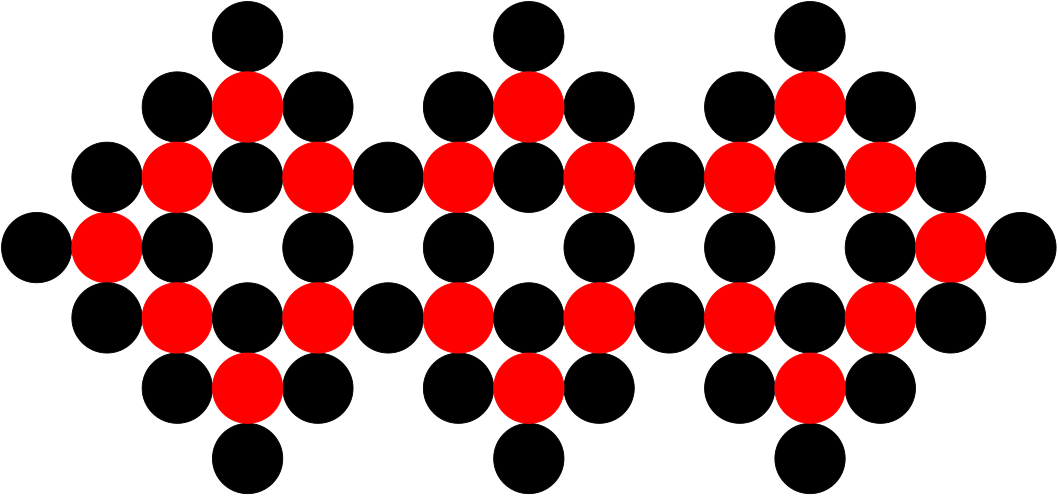}
  \label{fig:sub2}
\end{subfigure}
\begin{subfigure}{.5\textwidth}
  \centering
  \includegraphics[width=.8\linewidth]{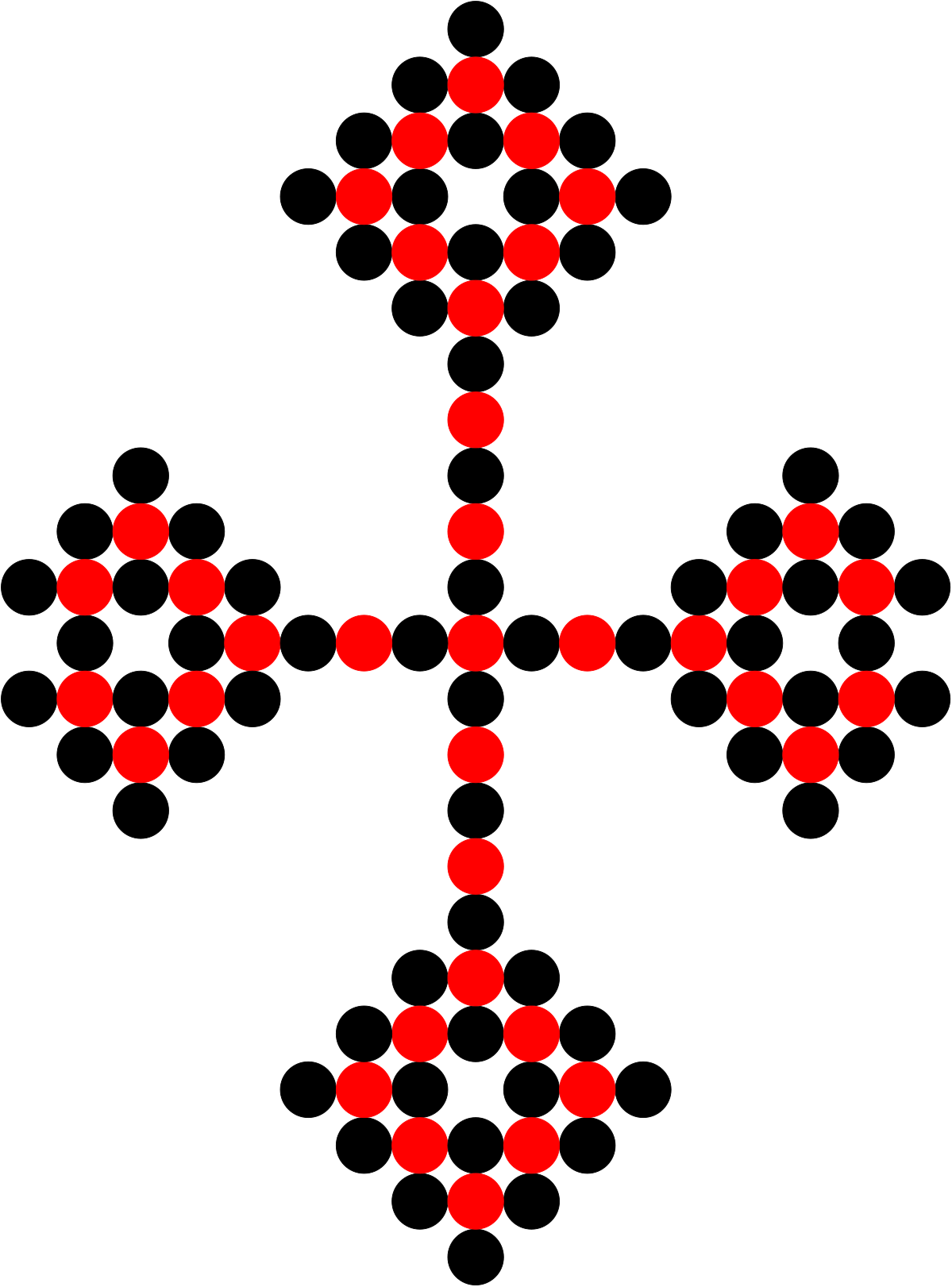}
  \label{fig:sub4}
\end{subfigure}
\caption{Examples of blocked shapes.}
\label{fig:test}
\end{figure}

\FloatBarrier

\subsection{K-Blocked Shapes}

A shape is \emph{K-blocked} if at most $K$ moves can be made before a configuration is repeated.

\begin{lemma}
An arbitrary shape \emph{A} is $K$-blocked if the 2D grid is finite or there is at least one node which cannot move in any configuration of $A$.
\end{lemma}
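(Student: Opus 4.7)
The plan is to prove both sufficient conditions by a pigeonhole argument: in each case I show that the set of reachable configurations is finite, from which $K$-blockedness follows immediately by taking $K$ to be (an upper bound on) the size of that set.

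First I would dispatch the finite-grid case. If the grid has, say, $m$ cells in total and the shape $A$ contains $n$ nodes, then every configuration reachable from $A$ is an arrangement of $n$ labelled nodes on $m$ cells, so there are at most $m!/(m-n)!$ distinct configurations. Any sequence of moves strictly longer than this number must, by pigeonhole, revisit a configuration, so $A$ is $K$-blocked for $K = m!/(m-n)!$. This case is essentially bookkeeping.

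The more interesting case is when the grid is infinite but some node $w \in A$ is immobile in every reachable configuration. Because $w$ never moves, it stays at a fixed cell $c$ throughout the evolution. I would then invoke connectivity: in any reachable configuration, the induced graph on the $n$ nodes is connected (by hypothesis on the model), so every node lies within graph-distance $n-1$ of $w$, and hence within Manhattan distance $n-1$ of $c$. Thus every reachable configuration is supported on the $L_1$-ball of radius $n-1$ around $c$, a set of at most $2n^2 - 2n + 1$ cells. Applying the same counting argument as before on this finite region yields a finite bound on the number of reachable configurations, and hence a $K$ for which $A$ is $K$-blocked.

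The only step I expect to require care is the justification that, when $w$ is said to be unable to move ``in any configuration of $A$,'' this really does force $w$ to occupy a single cell $c$ for all time rather than merely to be statically blocked in $A$ itself; I would read the phrase as quantifying over all configurations reachable from $A$, which is the interpretation needed to run the connectivity argument. Given that reading, the proof reduces to the two pigeonhole counts above and no further obstacle arises.
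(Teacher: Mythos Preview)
Your proposal is correct and follows essentially the same approach as the paper: in both cases the argument is pigeonhole on a finite set of reachable configurations, with the finite-grid case being immediate and the immobile-node case reducing to it via connectivity (the paper phrases this as the anchor limiting the shape to Manhattan distance $n$, where you use the slightly sharper $n-1$). Your version is more explicit about the bounds and about the interpretation of ``in any configuration,'' but the underlying idea is identical.
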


\begin{proof}
If grid space is finite, then for $n$ nodes in $m > n$ spaces we have a fixed number of assignments of $n$ to $m$, making repetition necessary at some point.

If one node cannot move in any configuration, then it acts an anchor limiting the area a shape can occupy, thanks to the edge connectivity requirement.
This is equivalent to limiting the 2D grid space to Manhattan distance $n$ from a given anchor.
\qed
\end{proof}

\subsection{Time and Seed Lower Bounds for Line Transformations}

We now give a lower bound on the running time of any strategy which transforms a line into a nice shape.

\begin{lemma}
There exists a nice shape such that any strategy which transforms a line of $n$ nodes into the nice shape requires $\Omega(n^2)$ time steps in the worst case.
\end{lemma}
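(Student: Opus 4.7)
My plan is to choose a specific nice shape $N$ and to exhibit a \emph{potential} on shapes that changes by at most a constant per rotation while differing by $\Omega(n^2)$ between the starting line $A$ and $N$. Take $A$ to occupy the cells $(0,0),(0,1),\dots,(0,n-1)$, and let $N$ be the ``L-shaped'' nice shape whose central line sits in row $0$ at columns $0,1,\dots,\lceil n/2\rceil-1$ and whose only perpendicular tooth occupies the cells $(1,0),(2,0),\dots,(\lfloor n/2\rfloor,0)$. This $N$ has $n$ cells, matches the definition of a nice shape from Section~\ref{sec3}, and (as the constructions of Section~\ref{sec5} will show) is reachable from $A$, so the claim is not vacuous.

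Define the potential $\Phi(S)=\sum_{(i,j)\in S}j$, the sum of column indices over all occupied cells. A short inspection of the rotation rule shows that every legal rotation moves a single node from some cell $(i,j)$ to a diagonally adjacent cell $(i\pm 1, j\pm 1)$: the pivot is one of the four edge-neighbours of the moving node, and the motion may be clockwise or counterclockwise, yielding eight cases and in each case the destination is a diagonal neighbour. Hence $\Phi$ changes by exactly $\pm 1$ in every step, so any sequence of rotations transforming $A$ into $N$ has length at least $|\Phi(A)-\Phi(N)|$.

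A direct calculation gives $\Phi(A)=0+1+\cdots+(n-1)=\binom{n}{2}$, while the central line of $N$ contributes $\binom{\lceil n/2\rceil}{2}$ and the tooth contributes $0$ because all of its cells lie in column $0$. Thus
\[
|\Phi(A)-\Phi(N)| \;=\; \binom{n}{2} - \binom{\lceil n/2\rceil}{2} \;=\; \Omega(n^2),
\]
which is the claimed lower bound. The main obstacle is selecting a target that is simultaneously a valid nice shape and sufficiently ``column-concentrated'' to force an $\Omega(n^2)$ gap in $\Phi$; the L-shape achieves this by collapsing half of the nodes into a single column far from the tail of $A$. With the target fixed, the diagonal-step invariant for $\Phi$ is essentially immediate from the rotation rule, and the rest of the argument is arithmetic.
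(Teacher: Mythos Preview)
Your proof is correct and is in fact cleaner than the paper's own argument, but it proceeds along a genuinely different line.

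The paper chooses a T-shaped target (horizontal line of length $n/2$ with a vertical line of length $n/2$ rising from its midpoint) and argues informally that, to preserve connectivity, roughly $n/2$ nodes must each be ``transferred'' a distance of roughly $n/2$, giving $\Omega(n^2)$ moves in total. This is essentially a per-node distance argument and leans on an unproven claim that every strategy must route nodes in this way.

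You instead pick an L-shaped target and use a potential $\Phi(S)=\sum_{(i,j)\in S} j$. Because every legal rotation lands a node on a diagonally adjacent cell, $\Phi$ changes by exactly $\pm 1$ per step, so the number of steps is at least $|\Phi(A)-\Phi(N)|=\binom{n}{2}-\binom{\lceil n/2\rceil}{2}=\Omega(n^2)$. This sidesteps any reasoning about connectivity or about what strategies ``must'' do: the bound follows from the move rule alone. Your approach is more rigorous and would also apply unchanged to the paper's T-shape (the vertical tooth there sits in a single column $\approx n/4$, contributing $O(n^2/8)$ to $\Phi$, and the same arithmetic goes through). The paper's argument, by contrast, conveys the right intuition but does not actually exclude clever strategies; your potential does. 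One small remark: the lemma is stated without reference to a seed, but since any constant-size seed placed adjacent to the line perturbs $\Phi$ by only $O(n)$, your bound carries over to the seeded setting used later in the paper as well.
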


\begin{proof}
Our goal is to transform the line of length $n$ into a nice shape with two lines of length $n/2$, one horizontal line and one vertical line above and perpendicular to the node in the center of the horizontal line.

Let $c$ be the node in the line which the vertical line will be constructed above.

To avoid breaking connectivity, it is necessary for $M$ to transfer nodes from the ends of the line to the space above $c$.
Each of these nodes must perform $\ceil{n/2}$ movements assisted by $M$.
While the distance to the $c$ grows shorter with each node transferred, the line above $c$ grows longer.
Therefore, given that $\ceil{n/2}$ nodes must move towards and onto the vertical line, the total number of movements $m$ is given by $\ceil{n/2} \cdot \ceil{n/2} = \Omega(n^2)$.

\qed
\end{proof}

We define a \emph{connected} seed to be a seed which is a connected shape by itself.
We next show that a connected seed of size $s < 3$ can only move a constant number of nodes (5).
Note that if the seed is disconnected a 2-node seed is able to enable non-trivial movement by placing the nodes of the seed in positions $(2, 1)$ and $(n - 3, 1)$ and working with the nodes at both ends of the line at the same time.
The position of the seed can also be symmetrical i.e. above $(n - 1, 0)$ to $(n - 3, 0)$ or even below the line so long as the destination of the pairs is also mirrored.

\begin{lemma}
Any line of nodes $S$ of length $n$ can move at most five nodes from the line with any k-seed of size $k < 3$ nodes.
\end{lemma}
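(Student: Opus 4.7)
My plan is to establish the bound by a case analysis over attachment types of the connected $k$-seed $S$ (with $|S| \le 2$) to the horizontal line $L = \{(0, j) : 0 \le j \le n - 1\}$. Up to the symmetries of the grid, an attachment of $S$ falls into one of three families: (i) $S$ is collinear with $L$ at an endpoint, extending $L$ into a longer line; (ii) $S$ is attached perpendicularly at an interior position, yielding a T-shape; (iii) $S$ is attached at or immediately next to one of the two corners of $L$.

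For family (i), by Theorem 1 and the observation that a line has no exterior-to-exterior edges other than at its two endpoints, only the endpoints of the combined line can initiate a rotation. A direct check shows that at most one line node, the far endpoint of the original $L$, ever changes position, since any further rotation forces a disconnection. For family (ii), the seed is clamped between two line neighbours and is therefore rigid in place; furthermore, any rotation of a line node adjacent to the seed would split $L$ into two disjoint halves. Thus only the two endpoints of $L$ can move, giving at most $2$ line nodes.

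Family (iii) is the delicate case, because with $S$ placed near a corner a small ``workshop'' of empty cells becomes available, and this can allow a short cascade of rotations. I would enumerate the reachable configurations starting from each such attachment, tracking which line cells ever become empty. The cascade is bounded by the following observation: each time a line node is displaced, it must end up in a cell adjacent to $S$ or to a previously displaced node, otherwise connectivity breaks; since $S$ has at most two nodes, only a constant number of cells satisfy this adjacency requirement. A careful audit of all attachments of a $1$- or $2$-node seed at or adjacent to a corner yields that at most four distinct line nodes near the corner can ever move, plus at most one line node at the far endpoint of $L$, for a total of five.

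The main obstacle is completing the case (iii) enumeration while being sure that no more than five line nodes can ever be displaced. I expect to reduce the many concrete sub-cases via the symmetries of the grid (reflection across the line and exchange of the seed's two nodes), and to use a potential argument that counts the number of empty cells adjacent simultaneously to $L$ and to $S$ together with the set of line nodes already displaced, showing this quantity strictly decreases with each new displacement. Bounding this potential by a small constant then caps the total number of line-node moves at five.
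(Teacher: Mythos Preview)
Your three-family decomposition is close in spirit to the paper's approach, which instead organises the cases by seed size ($k=0,1,2$) and then by the handful of placements adjacent to an endpoint, tracing the maximal cascade explicitly rather than via a potential function.

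There is a concrete gap in your family~(ii): a $2$-node seed attached above an interior cell is \emph{not} rigid. With seed nodes at $(1,j)$ and $(2,j)$, the top node rotates to $(1,j{+}1)$, then the node at $(1,j)$ rotates to $(2,j{+}1)$, and iterating this the pair walks freely along row $y=1$ until it reaches an endpoint of $L$; a horizontal $2$-seed at $(1,j),(1,j{+}1)$ does the same after one preliminary rotation. So an interior attachment does not confine you to the ``only the two endpoints move'' bound---it reduces to family~(iii) once the seed has migrated. This does not by itself break the bound of five, but it invalidates your case~(ii) as written, and you now owe an argument that a walking seed cannot exploit \emph{both} endpoints of $L$ in sequence. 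The paper sidesteps this by observing that, wherever the seed sits, it can only interact non-trivially with an end node, and then enumerates from the two useful end-adjacent positions $(0,1),(1,1)$ and $(1,1),(2,1)$. Your proposed potential for family~(iii) is also not obviously monotone: displacing a line node can simultaneously vacate one cell and occupy another adjacent to $S\cup L$, leaving the count unchanged, so you will likely end up doing the explicit enumeration anyway.
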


\begin{proof}
A line without seeds, with the connectivity preserving condition and with only rotation movements cannot do anything other than rotate the two nodes at each end point.
With a one node seed, the only possible action is for the node to be positioned in the cell $(2, 1)$ (or any equivalent symmetrical position) and rotate the end node at $(0, 0)$
to $(1, 1)$ to form a pair. This is equivalent to having a two node seed on a line of length $n - 1$.
With a two node seed, it can only interact with an end node and with each node in the positions $(0, 1), (1, 1)$ or $(1, 1), (2, 1)$ (or any symmetrical position).
In the former case, the end node can only rotate around the node in $(0, 1)$ because it depends on it to maintain connectivity.
In the latter case, the end node can rotate to $(0, 1)$. This allows the node in $(1, 0)$ and the node next to it (i.e. in $(2, 0)$) to rotate.
However, they cannot move much without breaking connectivity thanks to a reliance on the nodes in $(3, 0)$ and $(3, 1)$ for connectivity which restricts movement.

Therefore, if we start with a one node seed, form a two node pair, rotate the node in $(0, 1)$ to $(2, 1)$, move the two nodes in $(1, 0)$ and $(2, 0)$ and the node at the other end of the line, we have exhausted all possibilities to maximise the number of moving nodes without using a seed of size $k \geq 3$.

\qed
\end{proof}

\section{Transformation for Nice Shapes}\label{sec5}

In this section, we investigate the possibilities related to the transformation of shapes which are connectivity preserving.
We focus on the problem of converting a nice shape of $O(n)$ nodes into any other nice shape of $O(n)$ nodes. 
We do this by showing we can transform the canonical shape of a line with $O(n)$ nodes into any nice shape.
Due to reversibility, it follows that any nice shape can be transformed into such a line, and then into another nice shape.
More specifically, we first provide a solution for the variant of this problem (which we call $M$) where all the lines perpendicular to a central line $L$ in the nice shape are such that the node at the end of each line is the opposite colour to the node at the end of its nearest neighbouring lines. We then prove that slight modifications to the method of construction allow for the class of all nice shapes to be constructed.
Our methods construct a shape which is a union of a nice shape with constant waste $O(1)$.

We start with a shape $S$ which is a line of length $n$ occupying the cells $(0, 0)$ to $(0, n - 1)$.
We add a connected 3-node seed to the line as this is the minimum size which allows us to move more than 5 nodes without breaking connectivity.
The result is depicted in Figure 2.
It is possible for our results to apply to a disconnected 2-node seed with a slightly modified procedure but with higher waste.
We place the seed in a specific position as the connected 3-node seed is incapable of movement.
We sketch the line to nice shape proof in the following subsection.

\begin{figure}[h]
\includegraphics[width=0.8\textwidth]{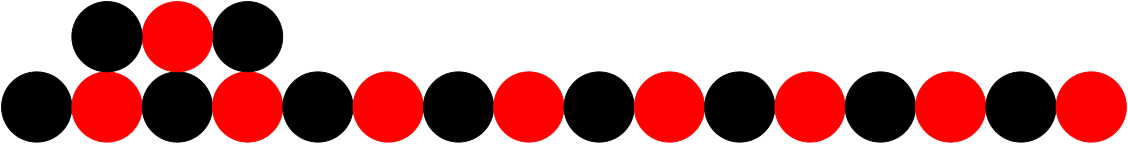}
\caption{The line with the seed attached.} \label{fig1}
\end{figure}

\FloatBarrier

\subsection{Line to Nice Shape}

Our first result is the following theorem:

\setcounter{theorem}{2}
\begin{theorem}
A line of length $n$ can be transformed to any given nice shape in the class $M_{n-1}$ using a 3-node seed in $O(n^2)$ time.
\end{theorem}
\setcounter{theorem}{2}

To solve this problem, we follow a strategy of having nodes rotate onto the horizontal line with the help of the 3 node seed and then constructing lines perpendicular to the horizontal line using the nodes.
Additionally, we move 4 nodes below the line and on the opposite side to the seed.
These nodes can then replicate the behaviour of the seed on the other side of the line, allowing for construction to occur below as well as above the line.
Because their behaviour is the same, we refer to the seed and the group of nodes on the other side of the line as \emph{builders}.
As a result, the horizontal line becomes the central line $L$ of the nice shape, and the vertical lines become the lines of nodes perpendicular to $L$.
Finally, some nodes which aid construction and cannot be incorporated into the final shape are discarded as waste.

To prove that this is possible, we define three algorithmic procedures.
The first procedure, \emph{RaiseNodes}, allows a builder to move two nodes at a time from the horizontal line. These nodes combine with the builder to form a 5 node cluster.
This cluster can be broken if necessary into a 3 node line and a 2 node line, allowing the 2 node line to move by having each node rotate around each other.
The second procedure, \emph{MirrorSeed}, is the procedure for creating the second builder below the horizontal line. It accomplishes this by moving two of the 2 node lines to the end of the horizontal and then rotating nodes in such a way that the four nodes are ``pushed" through the horizontal and to the other side.
The final procedure, \emph{DepositNodes}, collects nodes from the horizontal line and deposits them in any reachable location. We will show that the set of reachable locations is such that depositing nodes in a way which creates any nice shape is possible.

As a result, we end up with nice shapes where the central column $L$ corresponds to what is left of the original horizontal line.
However, the resulting nice shapes have only even lines. This is due to the construction procedures, which place two nodes at a time.
We therefore provide additional movements that allow us to expand the set of nice shapes which can be constructed to include all nice shapes.
We perform this for a special case and then generalise to drop this assumption and get any nice shape.

\subsection{RaiseNodes}

We use a 3 node seed in the cells $(1, 1)$,  $(2, 1)$,  $(3, 1)$ for our operations as, by Lemma 3, a two node seed is incapable of helping nodes to move.

We call the first operation \emph{RaiseNodes}. For this operation we use the 3 node seed to move nodes from the horizontal line such that they are on top of the horizontal line as a pair.
In the process, the 3 node seed moves along the horizontal line such that each node moves from its original position $(x, 1)$ to $(x + 1, 1)$.
The result can also be interpreted as a shape consisting of 5 nodes, which we refer to as a 5-node seed.

Moving the pair of nodes once they are on the line is a trivial process.
Each node rotates around the other node, alternating their relative positions within the two node shape.
As a result, the process can be repeated so long as the pair of nodes on the line can be moved out of the way by rotating around each other to create space.

The following lemma shows that these operations are possible.

\begin{lemma}
Using a 3 node seed in the cells $(1, 1)$ to $(3, 1)$, it is possible to move 2 nodes from the line such that the 3-node seed is converted into a 5-node seed.
\end{lemma}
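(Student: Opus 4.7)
The plan is to exhibit an explicit constant-length sequence of rotations that, starting from the horizontal line on row $0$ together with the vertical 3-node seed on column $1$, rows $1$--$3$, ends in a configuration in which two former line nodes sit as a pair adjacent to the (now translated) seed, so that together the five nodes constitute a connected 5-node seed shape. At every step I must check two things: that the swept cell and the destination cell of the intended rotation are empty, and that the shape remains edge-connected afterwards.

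The first step I would take is to rotate the seed's top node from $(3,1)$ to $(2,2)$, which bends the seed into an L. This rotation is legal because both $(3,2)$ and $(2,2)$ are empty, and connectivity is preserved because the rotated node is still attached to the rest of the seed via $(2,1)$. The purpose of the bend is to place an ``extra'' seed node at $(2,2)$, which will later act as a second bridge between the upper structure and the line, allowing a line node near the seed to be temporarily detached without disconnecting the shape. The second step is to rotate the free endpoint $(0, n-1)$ up to $(1, n-2)$, again legal and connectivity-preserving via attachment to $(0, n-2)$. With the seed bent and the end node lifted, a short further sequence of rotations (using the bent seed as temporary support for a line node near column $1$, then restoring the seed on the mirror side to act as support while a second line node is raised) produces the pair on row $1$ immediately next to the seed. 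A final clean-up rotation returns the seed to its canonical vertical orientation, now translated by one column, completing the 5-node seed.

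The main obstacle throughout is the connectivity constraint. A horizontal line only has two endpoints that can move, and every interior line node, if removed, disconnects the shape; similarly the seed bottom $(1,1)$ is the sole bridge between seed and line and cannot be moved unless an alternative bridge is first installed. Therefore every intermediate rotation must either act on a current endpoint of the line or be supported by a bridge already established by a previous rotation of the seed's top node. The verification consequently reduces to a case check on the $O(1)$-sized neighborhood of the seed, showing that at each intermediate configuration the shape has an edge-path from every node to every other, and that the pair-rotation of the two lifted nodes (which by a standard two-move ``pair walk'' preserves connectivity as long as one of the pair remains vertically adjacent to the line) is available to finalize the placement. Once the sequence is exhibited and these local checks carried out, the 5-node seed has been produced from the 3-node seed by moving exactly two line nodes, as required.
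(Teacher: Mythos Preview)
Your plan has a genuine gap, and it stems from misreading the seed's orientation. In the paper the coordinates in this section are $(x,y)$, so the cells $(1,1),(2,1),(3,1)$ are columns $1,2,3$ on row $1$: the seed is a \emph{horizontal} $3$-line sitting directly above the line nodes in columns $1$--$3$. You instead treat it as a vertical column on rows $1$--$3$ above column $1$. With that vertical placement your scheme stalls: the seed's bottom node at (row $1$, column $1$) occupies exactly the cell into which the left endpoint $(0,0)$ would have to rotate, and that bottom node is a cut vertex between the seed and the line, so it cannot be moved first. Bending the top of the seed to $(2,2)$ does not create a second bridge to the line (it is two rows away from row $0$), so the left end remains blocked. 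Your second step, lifting the far endpoint $(0,n-1)$, is on the opposite end of the line and contributes nothing to forming a $5$-node cluster near the seed.

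Beyond the orientation issue, the core of the argument is missing: ``a short further sequence of rotations \ldots produces the pair'' is precisely the content the lemma asks you to exhibit. The paper's proof uses the horizontal seed and gives the explicit sequence: rotate the \emph{leftmost} seed node (at $x=1$) up above the middle seed node (to $(2,2)$), which vacates the cell $(1,1)$; then the two leftmost line nodes $S_0,S_1$ pair-walk up through the freed column and along the top of the seed until $S_1$ sits above the third seed node; a couple of further rotations shift the pair right and restore the three-in-a-row seed one cell to the right, yielding the $5$-node group. You should redo the construction with the horizontal seed and actually list the rotations, checking emptiness of the swept/destination cells and connectivity at each step; the far-endpoint move should be dropped entirely.
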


\begin{proof}
The seed can only be placed in the cells specified as a three node line is incapable of translating to another position.

First, the leftmost node of the three node line must rotate above the middle node.
It is then possible for the first and second nodes in the line $S$, $S_0$ and $S_1$ to rotate around each other along the seed. Only one node ($S_1$) can move around the seed clockwise, reaching the space above the third seed node.
By moving these two nodes one space right (rotating around each other) space is created for the top seed node and first line node to both rotate clockwise. The top seed node and third seed node both rotate clockwise, recreating the shape above the line after the $S_0$ and $S_1$ were initially rotated but one cell to the right. See the figures for a diagram of the transformation.

There is now a five node group above the line. Two of nodes move right to the end, with the other three now occupying the spaces $(3, 1)$ to $(5, 1)$, two spaces to right of the initial position.
\qed
\end{proof}

\begin{figure}
\centering
\begin{subfigure}{.4\textwidth}
  \centering
  \includegraphics[width=.8\linewidth]{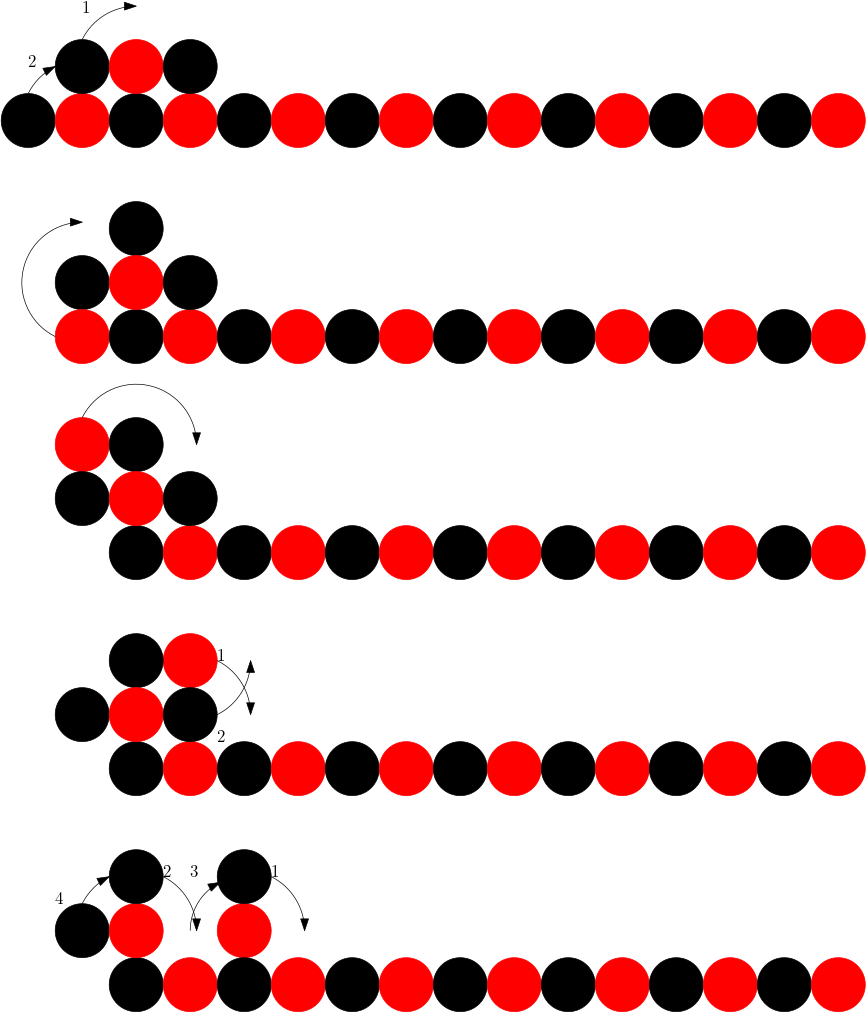}
  \label{fig:sub1}
\end{subfigure}
\begin{subfigure}{.4\textwidth}
  \centering
  \includegraphics[width=.8\linewidth]{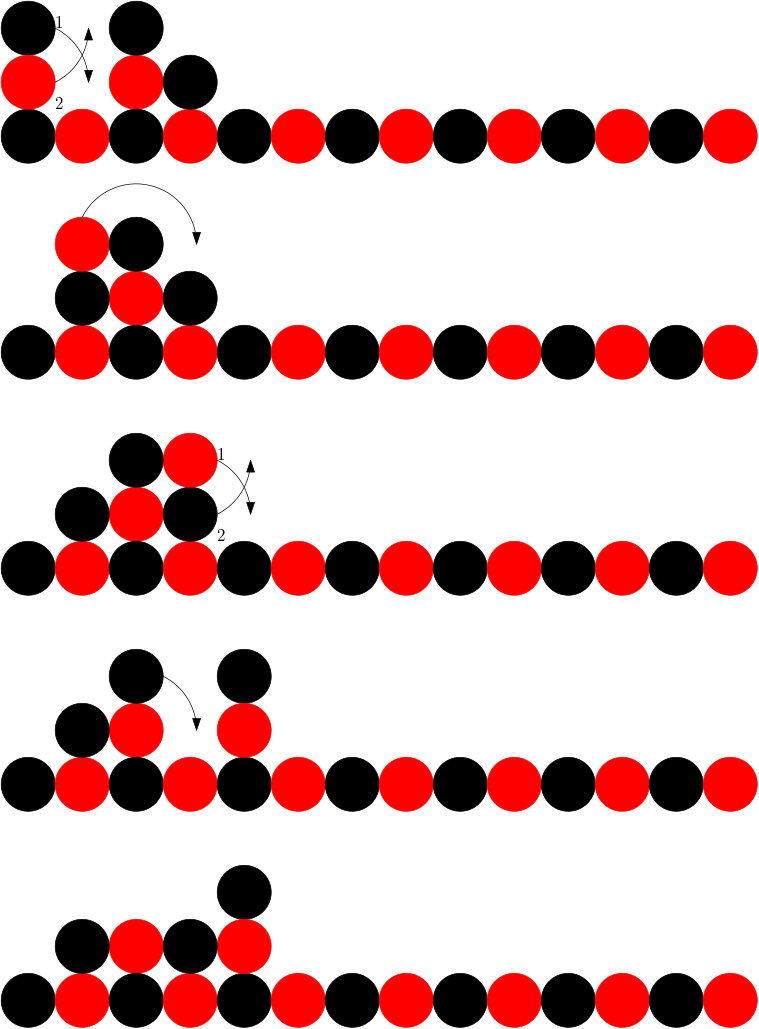}
  \label{fig:sub2}
\end{subfigure}
\caption{Raising nodes from the line.}
\label{fig:test}
\end{figure}

\FloatBarrier

\subsection{MirrorSeed}

We now use RaiseNodes for our next operation, \emph{MirrorSeed}, to place four nodes at the opposite side of the line (i.e. $(n - 4, 1)$ to $(n - 1, 1)$) and then push them through and below the line, creating a four node mirror of our original seed in the cells $(n - 4, -1)$ to $(n - 1, -1)$.
Having a mirror of the original seed allows us to perform construction operations on the bottom of the horizontal line. We do this in 3 steps:
raise four nodes using RaiseNodes twice, position the four nodes at the end of the line and rotate the nodes and those at the end of the line such that the four nodes move through (not around) the line and to the other side.

\begin{lemma}
Using a 3 node seed in the cells $(1, 1)$ to $(3, 1)$, above a line $L$ of length $n$ it is possible to create a 4-node line in the cells immediately below the nodes $(n - 5, 0)$ to $(n - 2, 0)$.
\end{lemma}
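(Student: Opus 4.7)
The plan is to compose MirrorSeed from three phases, following the outline preceding the lemma. In Phase~1, apply Lemma~4 (RaiseNodes) twice in succession: each invocation lifts two nodes off the horizontal line and advances the 3-node seed two cells rightward along the line, so after two applications we have two 2-node pairs resting on top of the line, with the seed shifted four cells right of its starting position. In Phase~2, roll these two pairs rightward along row~$1$ (each pair advances by having its two nodes rotate around each other, as in RaiseNodes) until they occupy the four cells directly above $(n-5,0), (n-4,0), (n-3,0), (n-2,0)$. Throughout Phases~1 and~2 connectivity is immediate: the raised nodes always retain an edge-neighbor in row~$0$, and the seed stays edge-adjacent to the horizontal line.

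Phase~3, the push-through, is the delicate step. With four nodes stacked on top of the right end of the line, we use a cascading sequence of rotations that cycles them one by one around the right end and into the row immediately below. The cascade begins by rotating the rightmost line node clockwise around its left line-neighbor into the cell immediately below the line, freeing the former end cell of the line; one of the raised nodes then rotates down into the vacated cell; the node already below then slides leftward via a rotation around its freshly-deposited sibling; and so on. After four iterations of this motif the four former raised nodes occupy the cells immediately below $(n-5,0), (n-4,0), (n-3,0), (n-2,0)$, the horizontal line is restored up to at most a constant-length truncation at its right end, and the seed remains undisturbed above.

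The main obstacle is verifying Phase~3 in detail. At each elementary rotation one must check both that the two cells swept by the moving node are empty and, more importantly, that the move does not sever global edge connectivity. Connectivity is the non-trivial part: as each raised node descends around the end of the line it momentarily loses contact with the upper cluster and must remain edge-adjacent to a node that is itself connected via the horizontal line to the seed. The verification reduces to a localized case analysis at the right end of the line, which can be handled by induction on the number of raised nodes already pushed through, showing that each intermediate configuration is connected and that the next rotation is legal. Once Phase~3 is checked, the lemma follows by composing the three phases.
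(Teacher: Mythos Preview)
Your three-phase decomposition (two applications of RaiseNodes, transport of the two pairs to the right end, then a local push-through) is exactly the paper's plan, so at the structural level you are aligned with the intended proof.

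The gap is in Phase~3. Your cascade, as written, stalls after two moves. After the rightmost line node rotates down to row~$-1$ at column~$n-2$, and a raised node drops into the vacated cell $(0,n-1)$, those two nodes are only \emph{diagonally} adjacent; the node below therefore cannot ``rotate around its freshly-deposited sibling'' because they share no edge. Moreover the node at $(-1,n-2)$ has $(0,n-2)$ as its sole edge neighbour, and both cells it could rotate into from there are already occupied, so it is stuck. The induction you propose on the number of nodes already pushed through cannot get off the ground.

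The underlying reason is the positioning you chose in Phase~2. You park the four raised nodes above columns $n-5,\dots,n-2$, leaving the end cell $(0,n-1)$ uncovered; the paper instead places them above the \emph{last} four columns $n-4,\dots,n-1$, so that the right end of the configuration is a clean $2\times 4$ block flush with the end of the line. That flush block is what makes the push-through in the paper's Figure~4 work: with nodes present at both $(0,n-1)$ and $(1,n-1)$, the rightmost column can wrap around the end of the line to row~$-1$ without creating the isolated diagonal pair your cascade produces. (The off-by-one discrepancy between the lemma statement's target cells and the positions used in the paper's own proof is an artefact of the paper, not of your argument; but for the mechanism you should follow the proof's positioning, not the statement's.) Reposition the raised nodes one cell to the right and replace the cascading motif with the block-based push of Figure~4, and Phase~3 goes through.
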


\begin{proof}
We first move the 4 leftmost nodes in $S$, $S_0$ to $S_3$ to the top of the line.
We do this by following the procedure depicted in figures $2$ to $4$ to raise $S_0$ and $S_1$, and then repeat the procedure a second time with the next two nodes $S_2$ and $S_3$.
We now have 4 nodes a square above the end of the line. By rotating them around each other in pairs we can place them in the cells $(n - 4, 1)$ to $(n-1, 1)$.
We can then ``push" the nodes to the other side of $S$ by follow the procedure depicted in Figure 4. The result is four nodes in the cells $(n - 4, -1)$ to $(n-1, -1)$
\qed
\end{proof}

\begin{figure}
\centering
\begin{subfigure}{.4\textwidth}
  \centering
  \includegraphics[width=.8\linewidth]{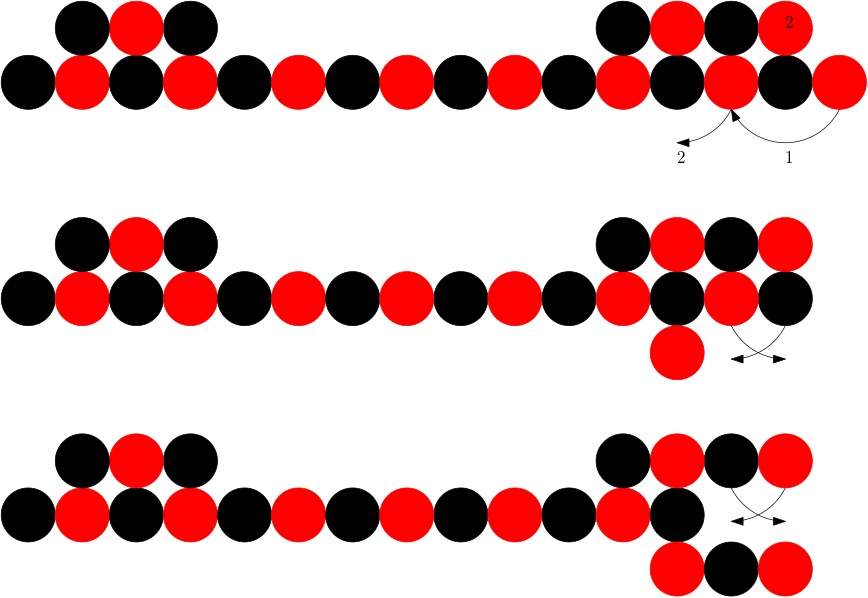}
  \label{fig:sub1}
\end{subfigure}
\begin{subfigure}{.4\textwidth}
  \centering
  \includegraphics[width=.8\linewidth]{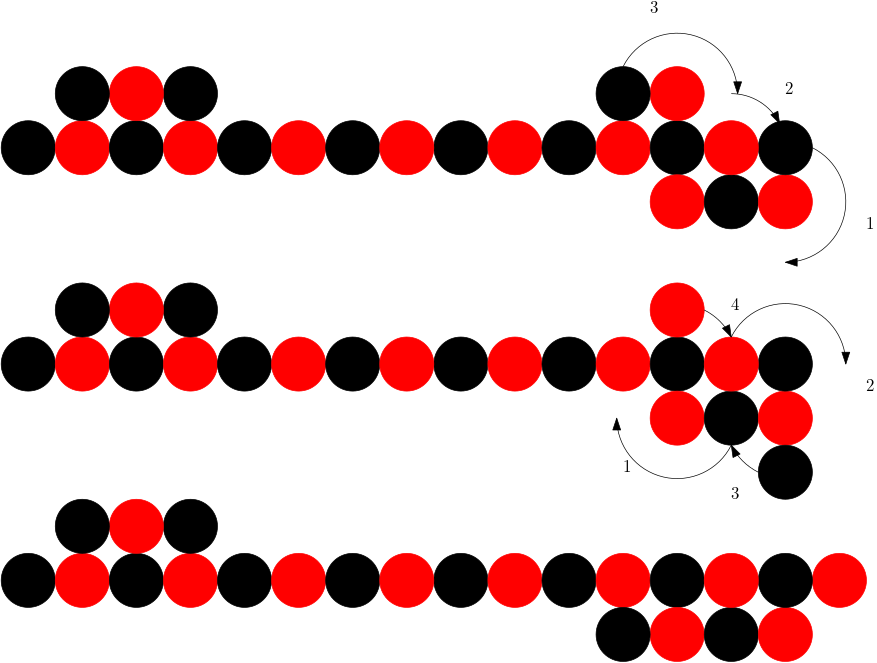}
  \label{fig:sub2}
\end{subfigure}
\caption{Pushing the nodes through the line.}
\label{fig:test}
\end{figure}

\FloatBarrier

\subsection{DepositNode}

Next, we present \emph{DepositNode}, a sub-procedure using the 3 node seed to create a 5 node shape and move a node from the horizontal line to any empty cell which the shape can reach, provided the 5 node shape has the correct colouring, defined as having 3 nodes of the colour which will fill the cell.

We raise two nodes from the line, use this shape to deposit a node and move the other 4 nodes as a square back to the left.
By leaving the cells above and below the two leftmost nodes in the line empty we can rotate the leftmost node, merging with the square to create a new 5 node shape.
We can therefore repeat the process of moving for each node one at a time.
In addition, this sub-procedure can be applied to the builder on the other side of the line.

We provide pseudo-code to describe this process.
If the seed nodes are making their first transfer then they need to raise two nodes to become a builder with 5 nodes.
After that, they only to raise one node at a time.
getLineHead() gets the node which is currently leftmost in the line.
move() moves a node to the destination. It must be the same colour but does not need to be the exact node.

\begin{algorithm}
\caption{DepositNode}\label{euclid}
\begin{algorithmic}[1]
\State $destination \gets (x, y) \text{ //co-ordinates node will be deposited on}$
\If{$seedFirstTransfer == true$ \text{ //If this is the first transfer for the seed nodes}}
 \State $pair \gets \textit{RaiseNodes}$
 \State $node \gets seed[3] \text{ //the node at the front when the seed is formed}$
\Else
 \State $node \gets getLineHead() \text{ //the current leftmost node on the line}$
\EndIf
\State $move(node, destination)$
\end{algorithmic}
\end{algorithm}

Our strategy is to demonstrate that the moves each builder can make are sufficient to be able to construct a nice shape.
We do this by providing examples of the situations which appear when constructing such shapes and proving that the movement we intend to accomplish is possible.
In this example, we show that it is possible to deposit the node at the end of the horizontal line.

\begin{lemma}
A 3 node seed on any line $S$ of length $n$, where $n$ is an even number, can transfer a node the other end of the line.
\end{lemma}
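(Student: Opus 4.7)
The plan is to invoke \emph{RaiseNodes} once to combine the 3-node seed with the two leftmost line nodes into a 5-node builder that has a 2-node pair sitting just above the line; then to walk that pair rightward along the top of the line until it is directly above the rightmost two line nodes; and finally to rotate one of its nodes down into the empty cell $(0,n)$ immediately to the right of the original line. Throughout the procedure the remaining 3-node sub-seed portion of the builder stays anchored just to the right of its original position and maintains connectivity with the left portion of the (now shortened) line.

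For the walking phase I expect a standard pair-rolling argument to suffice. With the line fully intact below the pair, each node of the pair can rotate $90\degree$ around the other, alternating their positions, the line beneath supplying both the required empty-cell conditions and an edge-connected path back to the anchored sub-seed. Each two-rotation step translates the pair two cells to the right, so after $O(n)$ rotations the pair sits above the rightmost two cells of the line.

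For the final deposit I rely on the fact that each rotation preserves the parity $(i+j)\bmod 2$, so every raised node inherits the checker-colour of its original cell. Cell $(0,n)$ has the same colour as $(0,0)$ exactly when $n$ is even, so the parity hypothesis is precisely the colour-consistency condition needed to place a node there. I would then exhibit a short sequence of two or three rotations that drops the correctly coloured pair-node onto $(0,n)$, pivoting it off its partner and the line node at $(0,n-1)$.

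The main obstacle I foresee is sequencing these last few moves at the extreme right end of the line, where the line no longer extends to provide support and where both the empty-cell condition and global connectivity must be checked at each intermediate step. Along the interior of the walk the pair is flanked by line nodes and open cells, so the rotations compose trivially; only at the boundary do the final rotations require explicit laying out. I expect this reduces to a small constant-size case analysis, with the even-$n$ hypothesis guaranteeing that at least one valid deposit sequence exists.
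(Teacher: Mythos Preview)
Your approach differs from the paper's. The paper's (very terse) proof points to a figure whose repeatable step shifts the whole 5-node builder two cells to the right along the line; this is precisely the transport mechanism of \emph{DepositNode}, which afterwards needs the remaining four nodes to return as a square. You instead leave the 3-node sub-seed anchored on the left and roll only the detached 2-node pair along row~1. For the bare claim of the lemma---getting one node to the far end---your variant is simpler and does work: at every intermediate step the pair is edge-adjacent to the line beneath it, the line is edge-adjacent to the anchored seed, and the final clockwise rotation of the rightmost pair node around the last line node into the empty cell beyond it is unobstructed. What the paper's version buys is compatibility with the surrounding \emph{DepositNode} machinery (carry five out, drop one, bring four back), which your pair-walk does not provide.

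One point to tighten: your parity justification does not actually match your own construction. You argue that the node originally at $(0,0)$ must land on a same-colour cell, whence $n$ even. But in your pair-walk, whichever node currently occupies $(n-1,1)$ necessarily has colour $n\bmod 2$---rotation preserves colour, so the occupied cell determines it---and that already matches the target cell regardless of the parity of $n$. Since the pair carries one node of each colour, your mechanism in fact sidesteps the parity constraint entirely. The even-$n$ hypothesis in the lemma reflects the paper's 5-node builder, which carries three nodes of one colour and two of the other; with the seed placed at $(1,1),(2,1),(3,1)$ the majority colour is fixed, and the deposit cell must match it. If you want your parity discussion to do real work, you should move the full 5-node builder as the paper does; otherwise simply note that your argument also handles odd $n$.
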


\begin{proof}
We position the 3 node seed in the same grid spaces as in Lemma 4.
We can then follow the process in Figure 5 to achieve our result.
The process of moving right two spaces is repeatable, these repetitions are omitted.
\end{proof}

\begin{figure}
\centering
\begin{subfigure}{.4\textwidth}
  \centering
  \includegraphics[width=.8\linewidth]{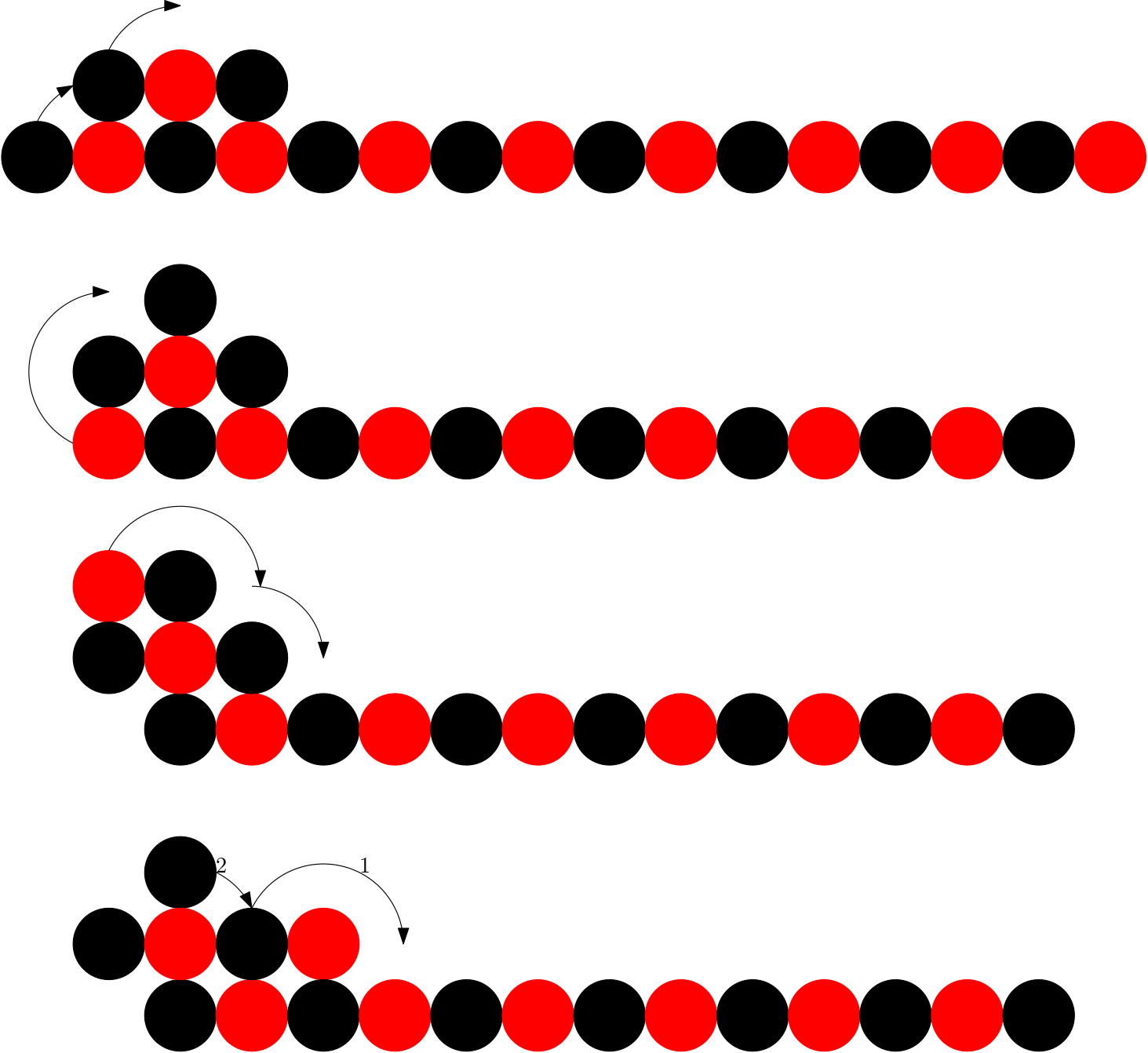}
  \label{fig:sub1}
\end{subfigure}
\begin{subfigure}{.4\textwidth}
  \centering
  \includegraphics[width=.8\linewidth]{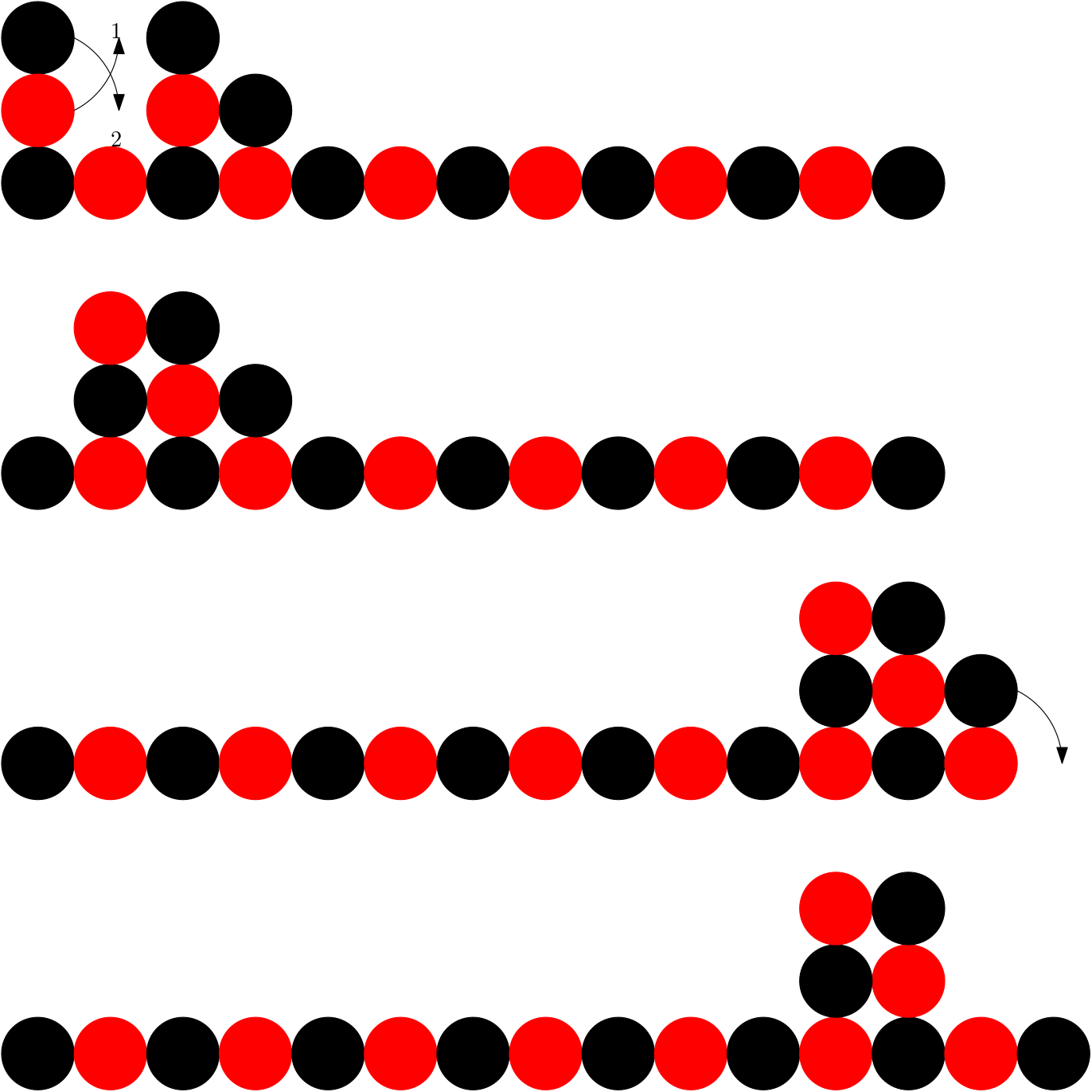}
  \label{fig:sub2}
\end{subfigure}
\caption{Moving a node across the line.}
\label{fig:test}
\end{figure}

\FloatBarrier

If we follow the procedures in Figure 6 it is possible to transfer the builder to a vertical line.
Therefore the process of adding another node can be performed on vertical lines, such as the ones we will build for our nice shapes.

\begin{figure}
\centering
\begin{subfigure}{.4\textwidth}
  \centering
  \includegraphics[width=.8\linewidth]{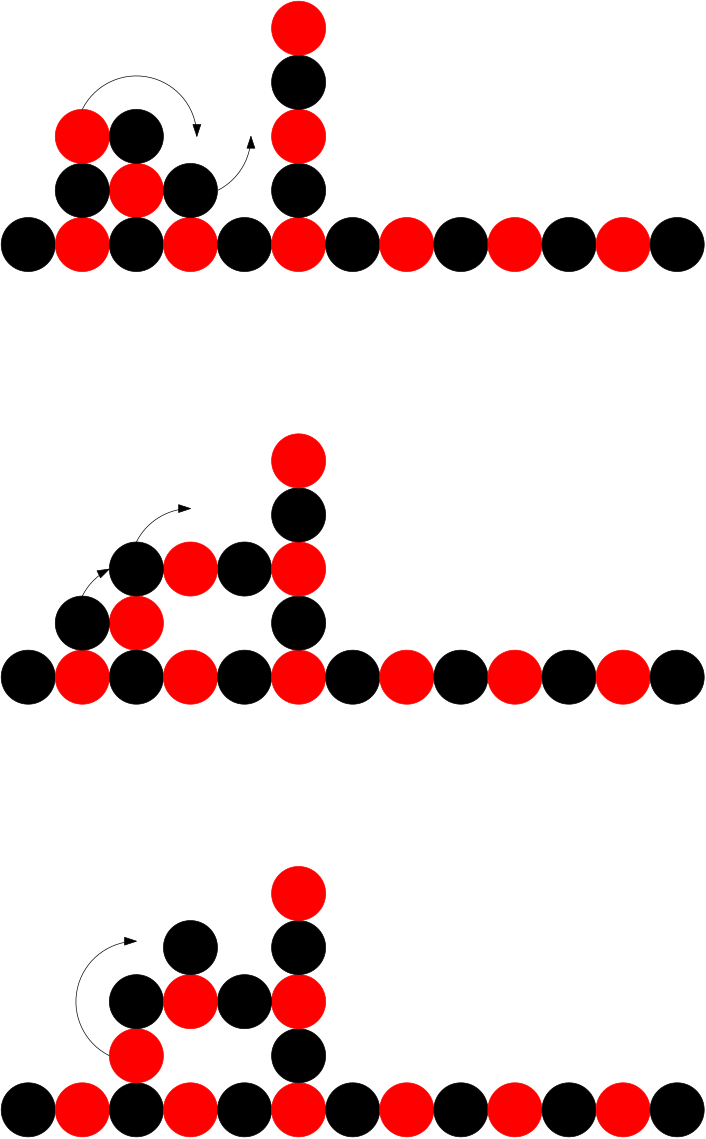}
  \label{fig:sub1}
\end{subfigure}
\begin{subfigure}{.4\textwidth}
  \centering
  \includegraphics[width=.8\linewidth]{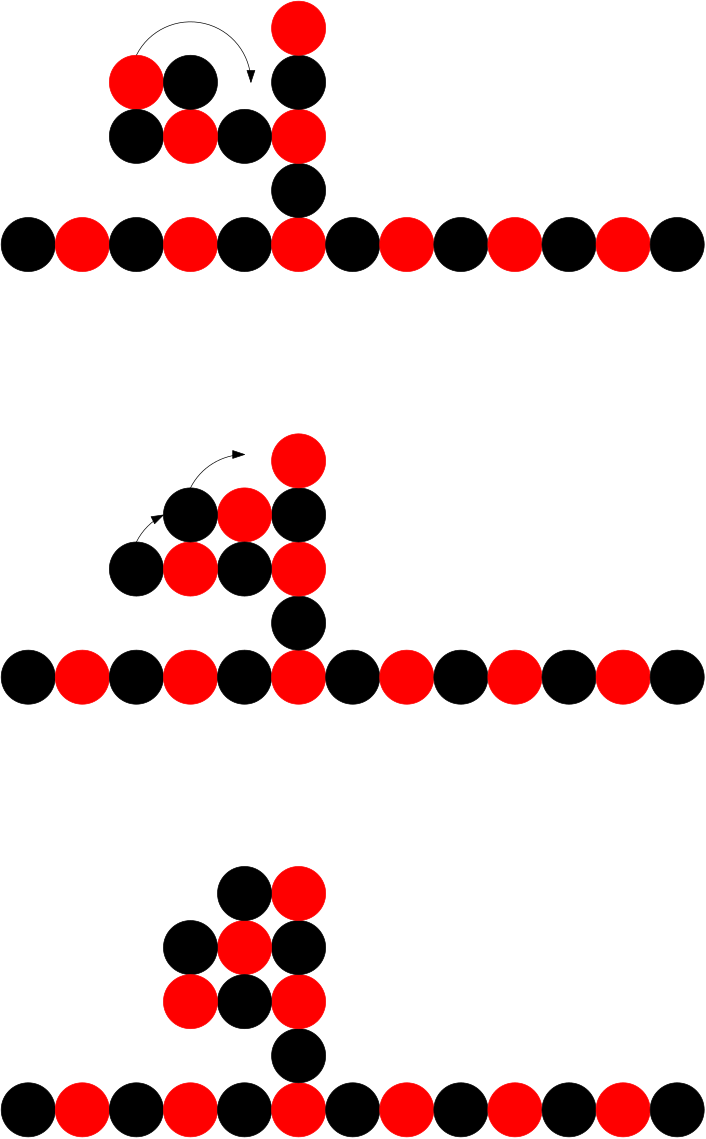}
  \label{fig:sub2}
\end{subfigure}
\begin{subfigure}{.4\textwidth}
  \centering
  \includegraphics[width=.8\linewidth]{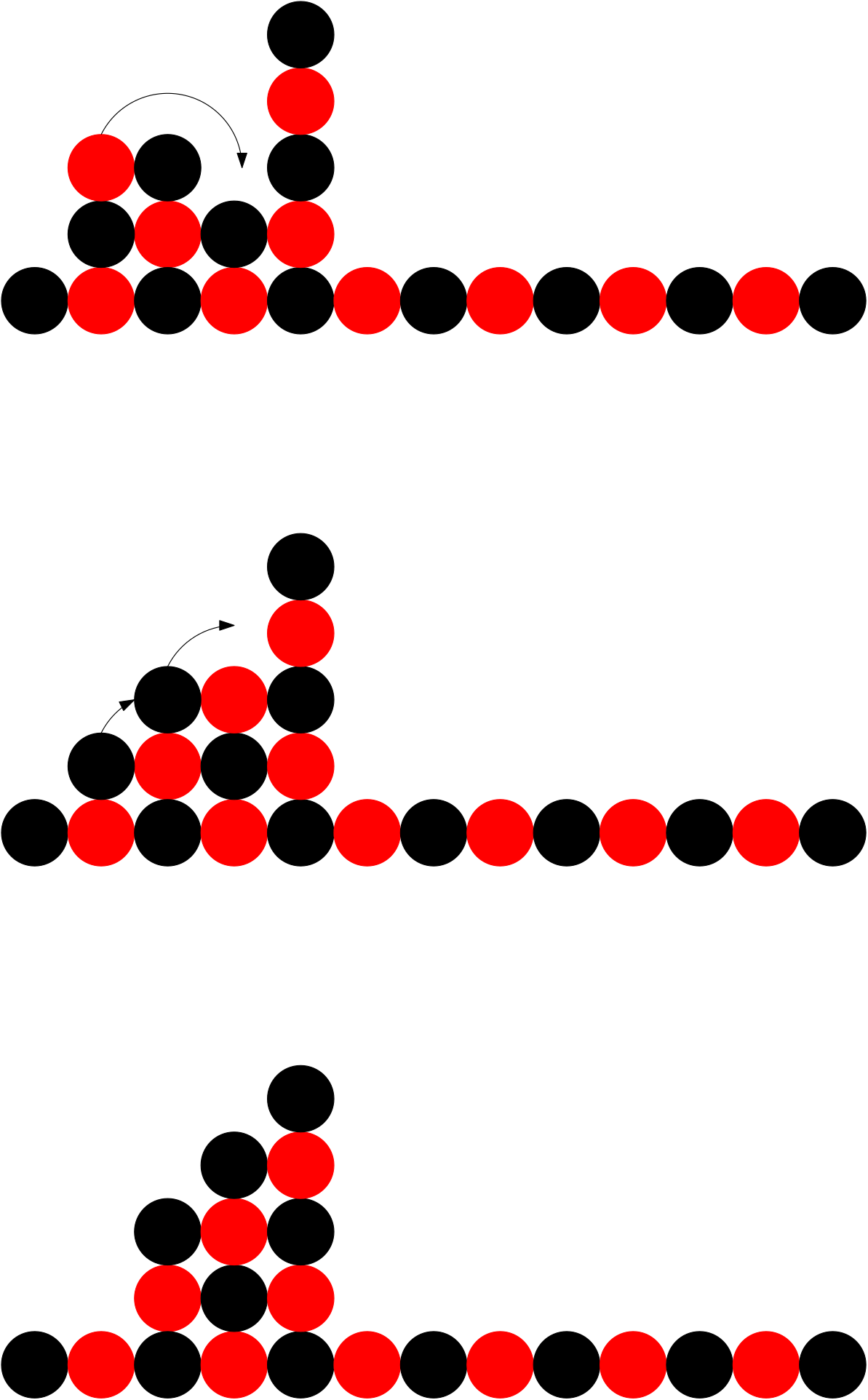}
  \caption{With the line colours reversed.}
  \label{fig:sub2}
\end{subfigure}
\caption{Moving the builder onto a vertical line.}
\label{fig:test}
\end{figure}

\FloatBarrier

To build any vertical line, we must first show that it is possible for DepositNode to construct lines of length 4 above the horizontal line.
After that, because it is possible for the builders to shift onto a 4 node line, the situation becomes that of depositing a node at the end of a line.

\begin{lemma}
Using a 3 node seed in the cells $(1, 1)$ to $(3, 1)$, above a line $L$ of length $n$ it is possible to create another line of length $4$ above any $u_i \in L$.
\end{lemma}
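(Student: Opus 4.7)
The plan is to iterate the \emph{DepositNode} sub-procedure four times, each time placing one node on top of the partially built vertical line above $u_i$. First I would move the 5-node builder (obtained from \emph{RaiseNodes}) horizontally along $L$ until it sits directly above $u_i$. This translation is already available: each horizontal shift of the builder is precisely the move established in the proofs of Lemmas 4 and 6, so repeating it $O(n)$ times places the builder at the desired column, consuming nodes from the left end of $L$ to feed the builder's pair.

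Next, I would invoke \emph{DepositNode} to place the first node at the cell directly above $u_i$. For the second, third, and fourth deposits, the builder must reach the current top of the growing vertical column. Here I would use the transfer-onto-vertical-line procedure depicted in Figure 6: after each deposit, the builder is first replenished by pulling the current leftmost node of $L$, then shifted so that its 5-node body straddles the top of the partial vertical line, and finally deposits a single node on top. Iterating this yields the remaining three cells above $u_i$ filled in order, producing a line of length 4 above $u_i$.

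The main obstacle is the colour constraint of \emph{DepositNode}, which requires the builder to hold three nodes of the colour that will fill the target cell. Because each deposit is made directly on top of the previous one, the required colour alternates from deposit to deposit; meanwhile the nodes pulled from $L$ also alternate in colour along the row. I would verify that the replenishment step can be scheduled so that, after absorbing the next line node, the builder always has the correct three-of-a-colour composition before climbing onto the vertical line. A secondary point to check is that during the climb there are enough empty cells surrounding the builder for its pivoting rotations; this follows from the fact that the cells of $L$ immediately to the left of $u_i$ have already been consumed by earlier \emph{RaiseNodes} calls, and that no other vertical line has yet been constructed in the neighbourhood of $u_i$ when this lemma is applied. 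Combining these observations with the already-established sub-procedures gives the claimed construction in $O(1)$ deposits, each costing $O(n)$ rotations for positioning, for a total of $O(n)$ time once the builder has reached column $i$.
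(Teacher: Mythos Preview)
Your approach is circular. The paper states explicitly, just before this lemma, that the Figure~6 procedure for shifting the builder onto a vertical line requires that line to already have length~4: ``After that, because it is possible for the builders to shift onto a 4 node line, the situation becomes that of depositing a node at the end of a line.'' This lemma is precisely the base case establishing that a 4-node vertical line can be built in the first place. You cannot invoke the vertical-climb procedure to place nodes 2, 3, and 4 onto a stub of height 1, 2, or 3; that is why the paper handles these three placements by special local manoeuvres (node~2 is deposited above $u_{i-1}$ and then rotated on top of node~1; nodes~3 and~4 use the ad~hoc sequence of Figure~7).

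Two secondary issues compound this. First, your colour argument is only a promise: the paper gives a concrete mechanism (park the wrong-colour node temporarily on $L$, pick up a fresh node, merge into a 6-node shape, then split off a 5-node builder of the desired colouring), whereas you merely assert that replenishment ``can be scheduled''. Second, your clearance argument is backwards: the cells of $L$ immediately to the left of $u_i$ are \emph{not} consumed --- they form the surviving central line of the nice shape and must remain occupied for connectivity --- so you cannot rely on them being empty to make room for the builder's rotations.
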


\begin{proof}
For this situation we have two scenarios: one where the colouring is correct and another where it is incorrect.
We first consider the correct colouring and then show how to deal with the incorrect colouring.

For the first node, we simply deposit the node using DepositNode above $u_i$.
The next node is deposited above $u_{i-1}$ and rotated to be above the first.
The next two nodes are more difficult, so we have provided Figure 7 to illustrate the process.

In the case where the colouring is incorrect, we deposit the incorrect node anywhere to the right directly above $L$ and collect a second node from $L$.
We can then merge the 5 node shape with the node we deposited temporarily to create 6 node shape.
Then 5 nodes of the correct colouring can split from the shape we created and deposit the node.

The 4 node square can then return to the node that was left behind and use it as the next node for depositing.
In this way, the 5 node shape is capable of ``selecting'' its colouring.
\qed
\end{proof}

\begin{figure}
\centering
\begin{subfigure}{.4\textwidth}
  \centering
  \includegraphics[width=.8\linewidth]{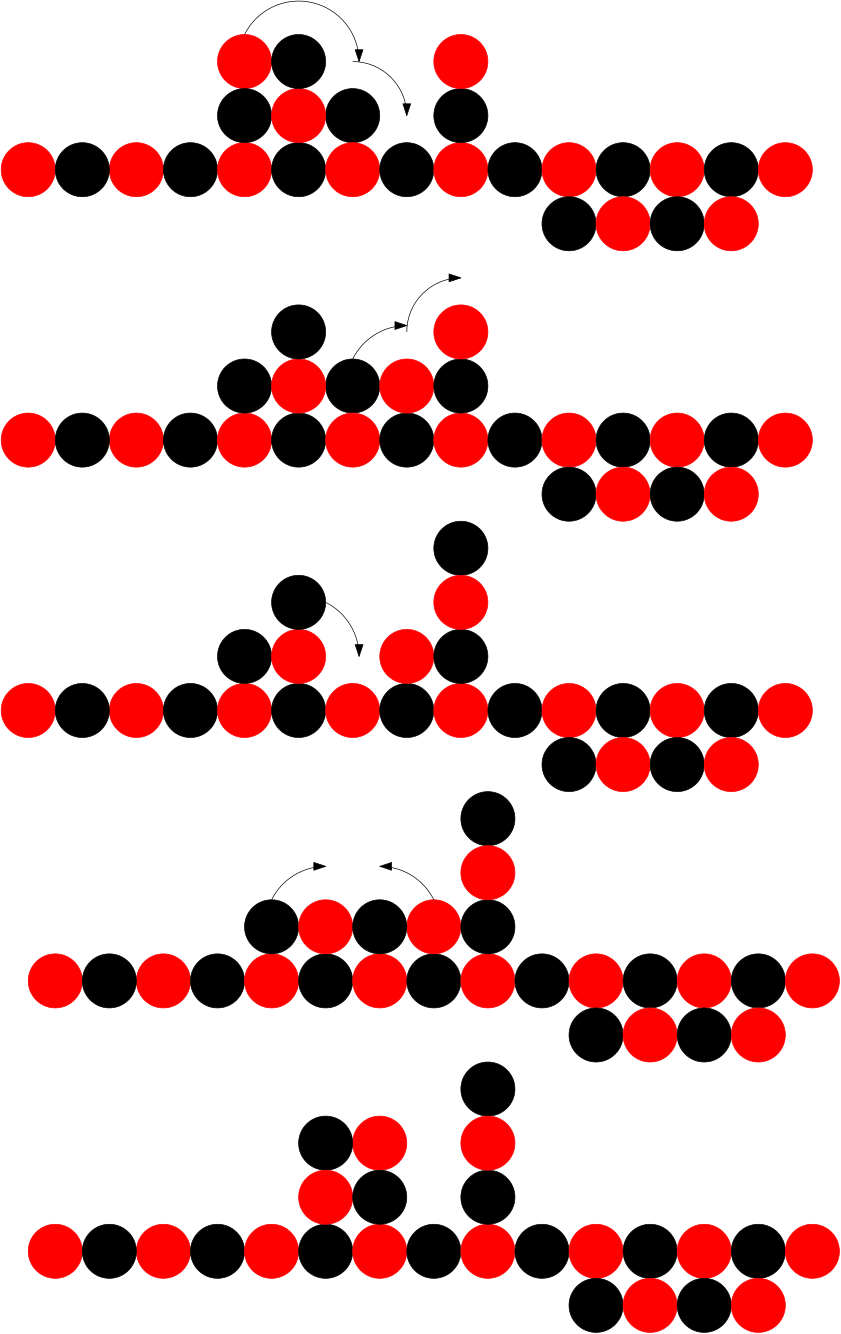}
  \label{fig:sub1}
  \caption{Adding the third node}
\end{subfigure}
\begin{subfigure}{.4\textwidth}
  \centering
  \includegraphics[width=.8\linewidth]{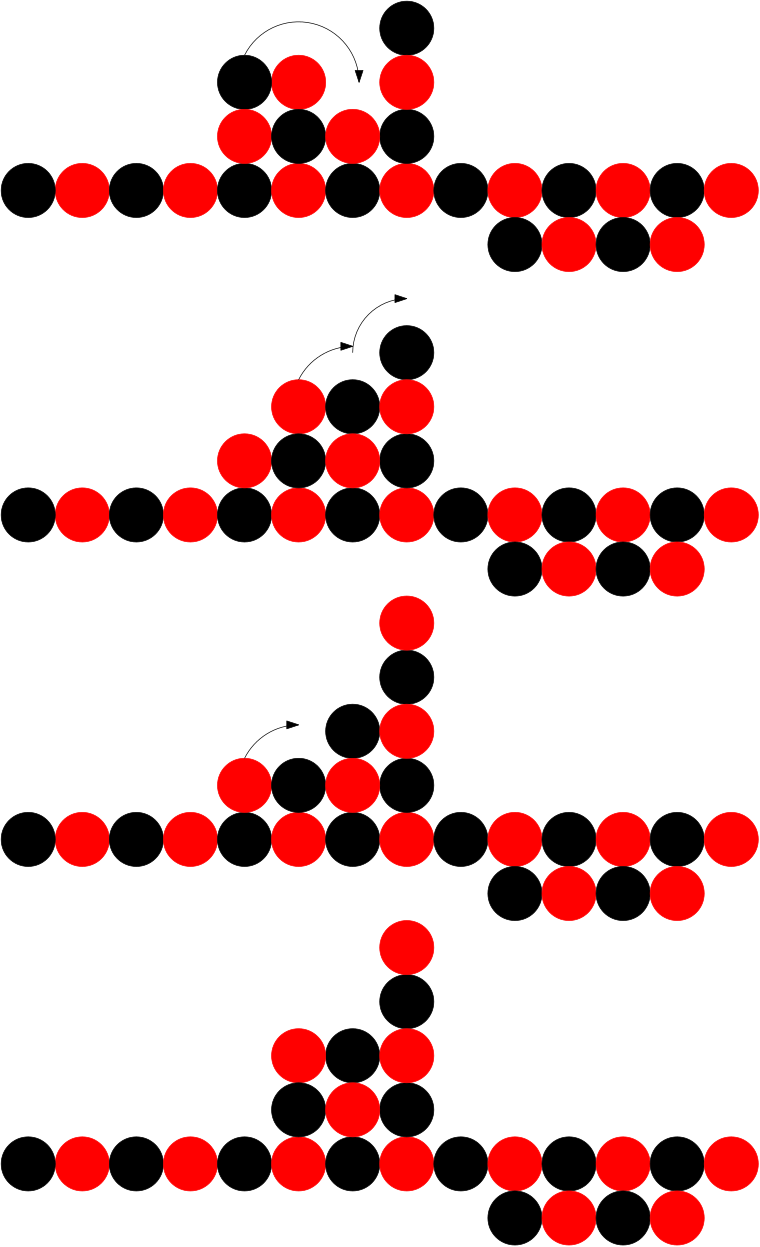}
  \label{fig:sub2}
  \caption{Adding the fourth node}
\end{subfigure}
\caption{Adding the last two nodes.}
\label{fig:test}
\end{figure}

\FloatBarrier

\subsection{Construction of a subset of nice shapes}
We now have all of the lemmas that are necessary to prove that it is possible to construct a specific subset of the class of nice shapes.
We first present an upper bound on the time for constructing nice shapes using our algorithm.
We then prove that using our sub-procedures we can construct a nice shape using a line and a 3-node seed, and finally we show that process is reversible using a 4-node seed.

Let $M_{n-w}$ be the class of nice shapes with $n$ nodes and $w$ waste for which the following property holds:\\

For all lines above and below $L$ the node at the end of each line is the opposite colour to the node at the end of its nearest neighbouring lines. \\

This restriction is necessary to guarantee that no node with the ``wrong'' colour is ever in the position to block construction.

\begin{lemma}
The transformation of a horizontal line of $n$ nodes into a nice shape requires $O(n^2)$ time steps.
\end{lemma}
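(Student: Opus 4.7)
The plan is to bound the running time by accounting, for each of the sub-procedures introduced above, the number of rotation steps required and then summing these over the course of the full construction. I would argue that \emph{RaiseNodes} costs $O(1)$ rotations each time it is invoked, since the procedure (as described in the proof of Lemma~4 and illustrated in the associated figure) consists of a fixed, constant-length sequence of rotations that transforms the 3-node seed plus two adjacent line nodes into a 5-node builder. Similarly, once the builder is formed, sliding the 5-node (or the 2-node pair factored off from it) by one cell along the horizontal line or along a vertical line under construction is a constant-cost operation.

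Next, I would handle \emph{MirrorSeed}. By Lemma~5 this consists of two applications of \emph{RaiseNodes} followed by shifting four nodes to the far end of the line and then ``pushing'' them through. The shift traverses at most the length of the line, giving $O(n)$ rotations, and the push-through is $O(1)$. So \emph{MirrorSeed} contributes $O(n)$ total steps, and is executed only once at the start of the construction.

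The main workhorse is \emph{DepositNode}. Each invocation picks up one node from the horizontal line and transports the builder to the destination cell. The builder must be relocated along the horizontal line (distance at most $n$), possibly transferred onto a vertical line via the constant-cost procedure of Figure~6, and then moved along that vertical line (distance at most $n$). This yields $O(n)$ rotations per deposited node. Since there are at most $n$ nodes to place (minus the few retained as the central line $L$ and the constant-size waste), the total cost of all depositions is $O(n)\cdot O(n) = O(n^2)$. The occasional colour-correction manoeuvre described in the proof of Lemma~7 adds only an $O(1)$ overhead per deposit and hence does not change the asymptotics.

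Summing the contributions gives $O(n) + O(n)\cdot O(n) = O(n^2)$ rotations in total, matching the claimed bound. The part that I expect to require the most care is verifying that the traversal cost of the builder between consecutive depositions is genuinely $O(n)$ even when the builder must switch between the line $L$ and a partially-built vertical line: one has to check that the transfer procedure of Figure~6 can be applied at any node of $L$ regardless of which vertical lines have already been erected, so that no deposition incurs more than linear traversal. Once this is established, the $O(n^2)$ bound follows immediately, and it is tight in view of the $\Omega(n^2)$ lower bound of Lemma~2.
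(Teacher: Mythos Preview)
Your proposal is correct and follows essentially the same approach as the paper: bound the cost of each sub-procedure, observe that \emph{DepositNode} dominates at $O(n)$ rotations per invocation, and multiply by the $O(n)$ invocations to obtain $O(n^2)$. You are in fact slightly more careful than the paper, which asserts that \emph{MirrorSeed} is $O(1)$ even though it requires shifting the four raised nodes across the line to cells $(n-4,1)$ through $(n-1,1)$; your $O(n)$ accounting for \emph{MirrorSeed} is the accurate one, though since it is invoked only once this does not affect the final asymptotic.
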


\begin{proof}
The RaiseNodes and MirrorSeed algorithmic procedures perform a sequence of specific movement and as such are $O(1)$ time.
The DepositNode procedure moves 5 nodes, deposits a node and returns with 4 nodes.
Therefore we must perform $5n + 4n$ moves to transfer one node, and in the worst case, we must build a vertical line of length $n$ above the last node $(n - 1, 0)$ in the horizontal line. This means each node must move past $n - 1$ nodes to reach their destination.

Therefore the process is bound by the speed of DepositNode, which is $(5n + 4n) \cdot (n - 1) = O(n^2)$ time steps.
\qed
\end{proof}

\begin{theorem}
A line of length $n$ can be transformed to any given nice shape in the class $M_{n-1}$ using a 3-node seed in $O(n^2)$ time.
\end{theorem}

\begin{proof}
The seed is positioned above the second, third and fourth nodes in the horizontal line, at $(1, 1)$, $(2, 1)$, and $(3, 1)$.
We first use RaiseNodes twice to raise $4$ nodes from the line and then use MirrorSeed to create a $4$ node builder below the horizontal line.
Then, we use DepositNode to construct the $5$ node builder.

The $5$ node builder can deposit a node in the construction area and move back to the end of the line by having each node rotate around each other.
It is then able to take another node from the horizontal by positioning itself two node spaces away from the end of the line and rotating the last node such that it is connected to the seed.

We are therefore able to follow a procedure for constructing vertical lines one node at a time.
The construction proceeds for each side of $L$ in phases $0 \leq i \leq |L|$, where phase $i$ corresponds to the construction of the column above the node $L_i$.

The entire process is mirrored for the bottom of the shape using DepositNode for the builder on the bottom.
The builder on the bottom waits until the builder on the top is finished and then starts lifting nodes from the same side of the line.
By moving the other builder slightly it is possible to avoid the situation where it disconnects from the line.

Finally, one of the builders places the nodes of the other builder, and is then discarded, leading to a waste of 1 node.
By Lemma 8, the whole process is completed in $O(n^2)$ time.
\qed
\end{proof}

\begin{theorem}
A nice shape in the class $M_n$ can be transformed to any given nice shape from $M_n$ using a 4-node seed in $O(n^2)$ time.
\end{theorem}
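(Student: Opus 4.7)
The plan is to exploit reversibility of rotations together with Theorem~3. The key observation is that when Theorem~3 transforms a line of $n+1$ nodes plus a $3$-node seed into a nice shape of $n$ nodes in $M_n$, the final configuration contains four residual nodes: the original $3$-node seed plus the single waste node that Theorem~3's proof leaves behind after ``one of the builders places the nodes of the other builder, and is then discarded''. Treating these four residual nodes collectively as a $4$-node seed attached to the resulting nice shape is what links Theorem~3 to the present claim.

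Given $A, B \in M_n$, the algorithm is a two-phase composition through the canonical line $L_{n+1}$. In Phase~1, I would place the $4$-node seed next to $A$ in the canonical terminal configuration produced by Theorem~3 when it constructs $A$ from $L_{n+1}$ plus a $3$-node seed. Because every movement in that algorithm is a rotation, and rotations are reversible as noted in the model section, running the construction in reverse transforms $A$ together with its $4$-node seed into $L_{n+1}$ together with a $3$-node seed in the canonical starting configuration of Theorem~3. In Phase~2, I would apply the forward algorithm of Theorem~3 with target $B$, which transforms $L_{n+1}$ plus the $3$-node seed into $B$ together with four residual nodes that form the final $4$-node seed beside $B$. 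Composing the two phases yields $A \rightsquigarrow B$, and by Lemma~8 each phase costs $O(n^2)$ rotations, so the total is $O(n^2)$.

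The main obstacle I expect is reconciling the \emph{position and shape} of the $4$-node seed so that a single canonical configuration can serve both as the terminal residue of Theorem~3 applied to $A$ and as the input for Theorem~3 applied to $B$. Since $A$ and $B$ can be structurally very different nice shapes, the natural terminal position of the seed-plus-waste may depend on which column was built last, and may sit on either side of the central line $L$. I would address this by fixing a canonical placement of the $4$-node seed adjacent to a designated endpoint of $L$ and arguing that the algorithm of Theorem~3 can always be completed so that its terminal residue lies in this canonical position, at the cost of at most $O(1)$ additional rotations at the end. The local manoeuvres already provided by RaiseNodes and DepositNode suffice to slide a small cluster along $L$, so this normalisation step both absorbs the freedom in the endpoint of construction and fits comfortably within the $O(n^2)$ budget, making the reverse-then-forward composition well-defined.
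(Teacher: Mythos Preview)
Your approach is essentially the same as the paper's: invoke reversibility so that the four nodes left over at the end of Theorem~3's construction play the role of the 4-node seed, run Theorem~3 backwards from $A$ to a line, and then forwards from the line to $B$, with the $O(n^2)$ bound inherited from Lemma~8. The paper's own proof is a terse two sentences expressing exactly this, without your additional discussion of normalising the seed's terminal position; that extra care is reasonable but the paper simply assumes it. One small accounting point: the four residual nodes are the final discarded builder, not literally ``the original 3-node seed plus the single waste node'' as you describe, and the paper reverses to a line of length $n$ rather than $n+1$; these are cosmetic index discrepancies and do not affect the argument.
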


\begin{proof}
The transformation can be made reversible by assuming that the $4$ nodes which are discarded at the end of the transformation constitute a $4$ node seed for transforming the nice shape into a line.
We can then construct a line of length $n$ by following the process in reverse, and from there construct a nice shape of size $n$.
\qed
\end{proof}

\subsection{Construction of any nice shape}

We now show how to extend this to the class of all nice shapes.
We follow a broadly similar procedure to the one in Theorem 3.
The key difference is that first create the \emph{foundation}, a layer of nodes above and below the horizontal line.
We place a node at the start of every vertical line which starts with the same node colour that the previous vertical line built would end with.
We then proceed as normal.
First we prove that the foundation is sufficient for constructing any colour-consistent nice shape.
Then we prove that the $5$ node builder is capable of crossing the foundation to deposit nodes.

\begin{lemma}
For any nice shape constructed from a line, for all lines perpendicular to $L$ with an odd number of nodes there is at most one line which cannot be paired with another line which ends in the other colour.
\end{lemma}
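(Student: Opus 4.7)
The plan is to prove the lemma by a colour-counting argument on the chessboard colouring. Let $\alpha$ denote the number of odd-length perpendicular lines (above or below $L$) whose base on $L$ sits on a black cell, and let $\beta$ denote the corresponding count at red bases. Because colours alternate along every grid line, a perpendicular line of odd length has its far endpoint coloured opposite to its base: lines counted by $\alpha$ end in red and those counted by $\beta$ end in black. Pairing an $\alpha$-line with a $\beta$-line yields two lines ending in different colours, so a maximum such pairing leaves exactly $|\alpha - \beta|$ lines unmatched. The lemma therefore reduces to establishing $|\alpha - \beta| \le 1$.

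The next step is to derive a colour-count identity. Each even-length perpendicular line contains equal numbers of black and red nodes and so contributes $0$ to $b(N) - r(N)$. An odd-length perpendicular line of length $2m+1$ has $m$ nodes matching its base colour and $m+1$ of the opposite colour; hence each $\alpha$-line contributes $-1$ and each $\beta$-line contributes $+1$ to $b(N) - r(N)$. Summing these contributions together with that of $L$ itself gives
\[
b(N) - r(N) \;=\; b(L) - r(L) + \beta - \alpha.
\]
Since $L$ is a line, $b(L) - r(L) \in \{-1, 0, 1\}$; and since $N$ arises from a line (up to a constant amount of waste), $b(N) - r(N) \in \{-1, 0, 1\}$ as well. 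Rearranging the identity immediately yields $|\alpha - \beta| \le 2$.

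The main obstacle is sharpening this bound to $|\alpha - \beta| \le 1$. I would address it by combining two observations. First, $\alpha + \beta \equiv |N| - |L| \pmod{2}$, because each odd-length perpendicular line contributes an odd number of nodes and each even one contributes an even number; hence $\alpha - \beta$ always has the parity of $|N| + |L|$. Second, the algorithm of Theorem~3 has freedom to select the colour of the wasted node (since waste is drawn from the builders), and $L$ is a contiguous subline of the original horizontal line whose starting colour is determined by its position on the grid. A short case analysis on the parities of $|N|$ and $|L|$ then excludes the extremal configuration $|\alpha - \beta| = 2$: whenever both $b(N) - r(N)$ and $b(L) - r(L)$ attain absolute value $1$, either their signs can be forced to agree by choosing the waste colour appropriately, or the parity constraint forces $\alpha - \beta$ to be odd, and hence at most $1$ in absolute value. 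This rules out $|\alpha - \beta| = 2$ and completes the proof.
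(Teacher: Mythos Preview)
Your colour-counting approach is genuinely different from the paper's. The paper argues \emph{operationally}, from the mechanics of the construction: nodes leave the horizontal line in pairs, so an odd perpendicular line can only arise when a pair is split between two lines (which are then paired by construction), or when a single node is exchanged against the parity of the horizontal line, and the latter can happen essentially once. Your argument instead attempts to establish the bound as a \emph{structural} invariant of any colour-consistent nice shape, via the identity linking $b(N)-r(N)$, $b(L)-r(L)$, and $\alpha-\beta$. That is a cleaner and potentially more robust route, and it does yield $|\alpha-\beta|\le 2$ immediately.

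Two small slips first: for an odd-length perpendicular line the far endpoint has the \emph{same} colour as the base, and the line contains $m+1$ nodes of the base colour and $m$ of the opposite (not the reverse). These two errors cancel in your identity, so the bound $|\alpha-\beta|\le 2$ survives, but the exposition should be fixed.

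The real gap is the sharpening to $|\alpha-\beta|\le 1$. Your case analysis is asserted rather than carried out, and the mechanism you invoke does not actually act on the quantities in question. The shape $N$ and its central line $L$ are fixed; the waste node lies \emph{outside} $N$, so choosing its colour cannot change $b(N)-r(N)$ or $b(L)-r(L)$, and hence cannot change $\alpha-\beta$. Your parity observation gives $\alpha-\beta\equiv |N|+|L|\pmod 2$, which is even exactly when $|N|$ and $|L|$ have the same parity; in the case where both are odd, $b(N)-r(N)$ and $b(L)-r(L)$ are each $\pm 1$, and nothing you have written forces their signs to agree. Concretely, take $L$ of length $3$ with two length-$3$ perpendicular lines above its two outer cells: this nice shape has $|\alpha-\beta|=2$ and is colour-consistent with \emph{some} line of nine nodes, so pure colour-consistency of $N$ is not enough. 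To close the argument along your lines you must use the stronger positional constraint that $L$ is a specific contiguous sub-segment of the original horizontal line (whose endpoint colours are determined), together with the actual colour contribution of the seed and the waste, and then do the case analysis explicitly.
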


\begin{proof}
We have the initial line which is either odd or even.
We can move nodes out in pairs to build lines.
It is possible to build lines which are odd by splitting a pair and distributing its nodes between two odd lines.
Such lines can therefore be paired.

However, there are two ways that an extra odd line can be created.
First, when the horizontal line is odd, we can support one odd vertical line by extracting the extra node.
Second, when the horizontal line is even, we can also split a pair with the horizontal, making it odd.
If both are attempted at the same time the resulting lines will end in different colours and therefore can be paired.
As a result, at most one line which is odd cannot be paired.
\qed
\end{proof}

\begin{lemma}
Any $5$ node builder which is constructing lines can cross the foundation to do so.
\end{lemma}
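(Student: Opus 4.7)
The plan is to perform a local analysis showing how the travelling $5$-node builder can step past an isolated foundation node $f$ sitting directly above (or below) the central line $L$. After a round of \emph{RaiseNodes}, the builder has the canonical shape of a $2\times 2$ square of nodes resting on top of a single anchor node $u\in L$. First I would position the builder so that its anchor lies two cells to the left of $f$, which places $f$ immediately to the right of the square in the row one above $L$.

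Next I would give an explicit short sequence of rotations which carries the builder past $f$. The key idea is to use $f$ itself as a temporary pivot: the rightmost pair of the square rotates clockwise so that its two nodes land in the cells immediately above and to the right of $f$, momentarily forming a $6$-node cluster whose connectivity with $L$ is maintained both through $u$ and through $f$. The remaining three builder nodes then translate one column to the right using the pair-rotation primitives already developed in the proof of Lemma $4$. A mirror of the first step then peels the builder off $f$ and reconstitutes the $2\times 2$ square, now anchored one column to the right of $f$. Since $f$ is never required to move and every rotation is reversible, the foundation node is left in its original cell.

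The main obstacle will be verifying, in each of the finitely many relative colourings and parities between the builder and $f$, that the sweep cells of every rotation are empty. I would handle this by a short case analysis: when the colour parities are directly compatible, the traversal above applies verbatim; when they are incompatible, the builder first performs one ``roll'' of its $2\times 2$ square using the two-node rotation primitive, which swaps the colour parity of its bottom row, and then applies the traversal. Multiple foundation nodes in nearby columns are handled by iterating the single-node traversal one foundation node at a time, which is legal because the manoeuvre only uses cells in the immediate neighbourhood of $f$ and leaves the rest of the foundation undisturbed.
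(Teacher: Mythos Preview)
Your approach differs substantially from the paper's and has a concrete gap. The paper does not try to steer the builder past a foundation node while leaving that node fixed. Instead it splits into three cases by comparing the colour of the carried node with that of the foundation node(s) ahead, and in two of the three cases the mechanism is a \emph{swap}: the builder sets down its carried node a few cells back, picks up the foundation node (which has the same colour), and continues; on the return trip it shifts the temporarily deposited node forward into the vacated foundation cell, so the foundation is restored. Only in the middle case---a single foundation node of the opposite colour with the cell just beyond it empty---does the paper give a short local rotation sequence that actually steps the builder past.

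The gap in your plan is the handling of consecutive foundation nodes. Your traversal explicitly lands builder nodes ``immediately above and to the right of $f$'', and you then argue that iteration works because the manoeuvre only touches the immediate neighbourhood of $f$. But in the paper's third scenario the foundation is a run of nodes in adjacent columns (necessarily with alternating colours), so the cell to the right of $f$ is itself occupied by the next foundation node and your rotations are blocked before they start. The paper resolves this not by stepping past each node in turn but by sending the top pair of the builder directly to the farthest same-colour node in the run and re-forming the $5$-node builder there---again a swap rather than a traversal. Your parity ``roll'' of the $2\times 2$ square cannot repair this: the obstruction is a physically occupied target cell, not a colour mismatch, and rolling the square one column only changes which builder node sits where, not whether the cells beyond $f$ are free.
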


\begin{proof}
When moving a builder carrying a node across the foundation, there are 3 scenarios the builder can encounter.

In the first scenario, there is a node in $(x, y)$ which is the same colour as the node being carried in $(x - 2, y)$.
In this case, the builder must deposit the node in $(x - 4, y)$ and collect the node it has encountered.
Then, when the builder is returning without carrying a node, it must shift the node it deposited from $(x - 4, y)$ to $(x, y)$.
This process is depicted in figure 8.

In the second scenario, the node at $(x, y)$ is a different colour and the cell $(x + 1, y)$ is empty.
For this scenario, the node which is being carried rotates into $(x + 1, y)$.
Then the builder's nodes rotate around each other to be above $(x - 1, y - 1)$ and $(x - 2, y - 1)$.
Then the top two nodes in $(x - 1, y + 1)$ and $(x - 2, y + 1)$ rotate around each other such that the node in $(x + 1, y)$ is the node being carried by the $5$ node builder.
This process is depicted in figure 9.

In the third scenario, there is a series of nodes beginning with the node $(x, y)$, with alternating colours blocking the builder.
In this case, we first identify the node $n$ which is the node furthest right in the series with the same colour as the node the builder is carrying. Then the top two nodes of the builder in $(x - 2, y + 1)$ and $(x - 3, y + 1)$ rotate until they are positioned such that they form a $5$ node builder with $n$.
This process is depicted in figure 10.

Any foundation must consist of any of these three scenarios arranged in a sequence.
Therefore, by following the correct process in the scenario the builder crosses the foundation and places a node of the correct colour.
Then while returning any nodes deposited can be shifted, creating a new foundation which is equivalent to the original.
\qed
\end{proof}

\begin{figure}
\centering
\begin{subfigure}{.4\textwidth}
  \centering
  \includegraphics[width=.8\linewidth]{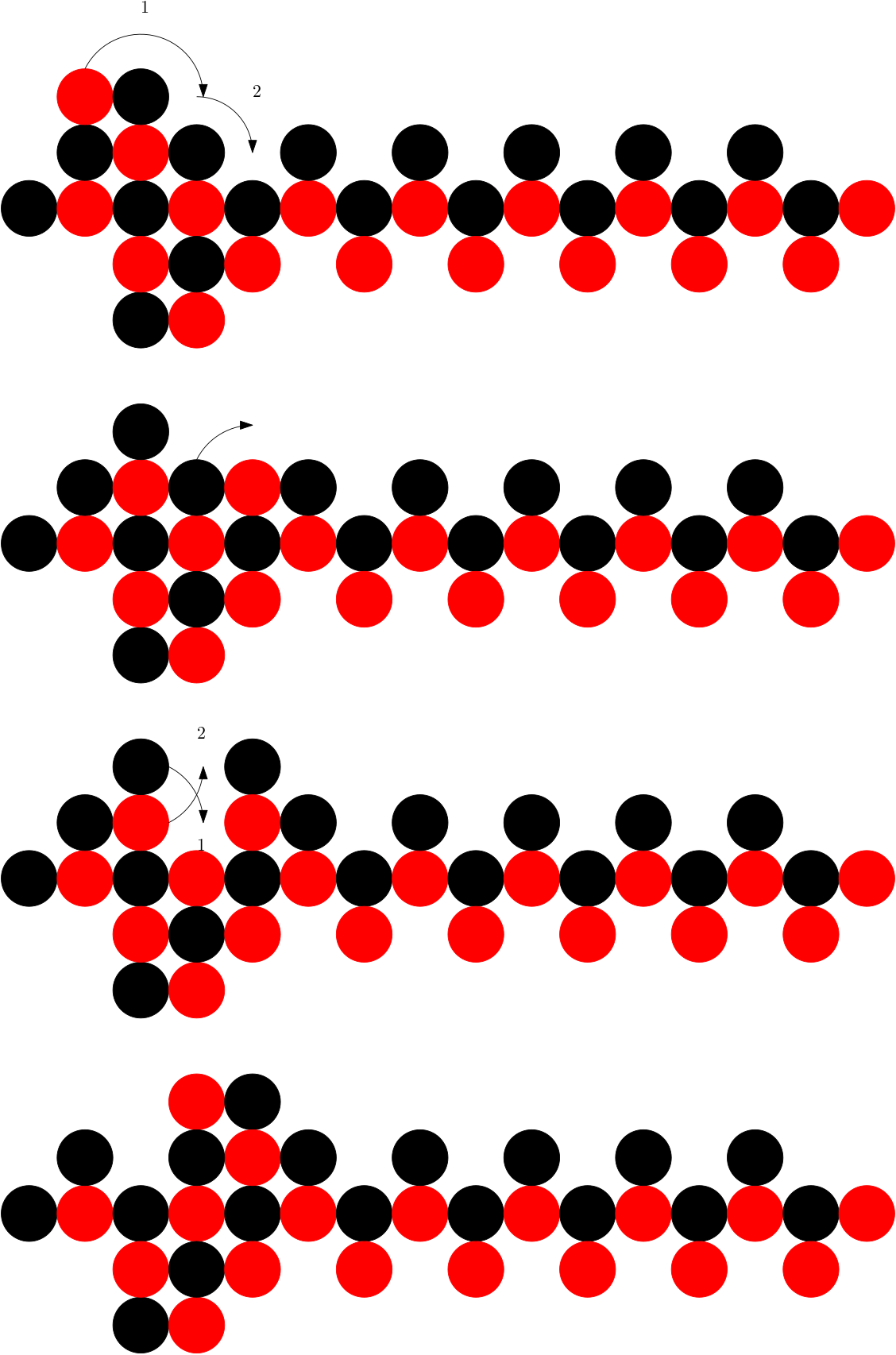}
  \label{fig:sub1}
  \caption{Passing one node of the same colour.}
\end{subfigure}
\begin{subfigure}{.4\textwidth}
  \centering
  \includegraphics[width=.8\linewidth]{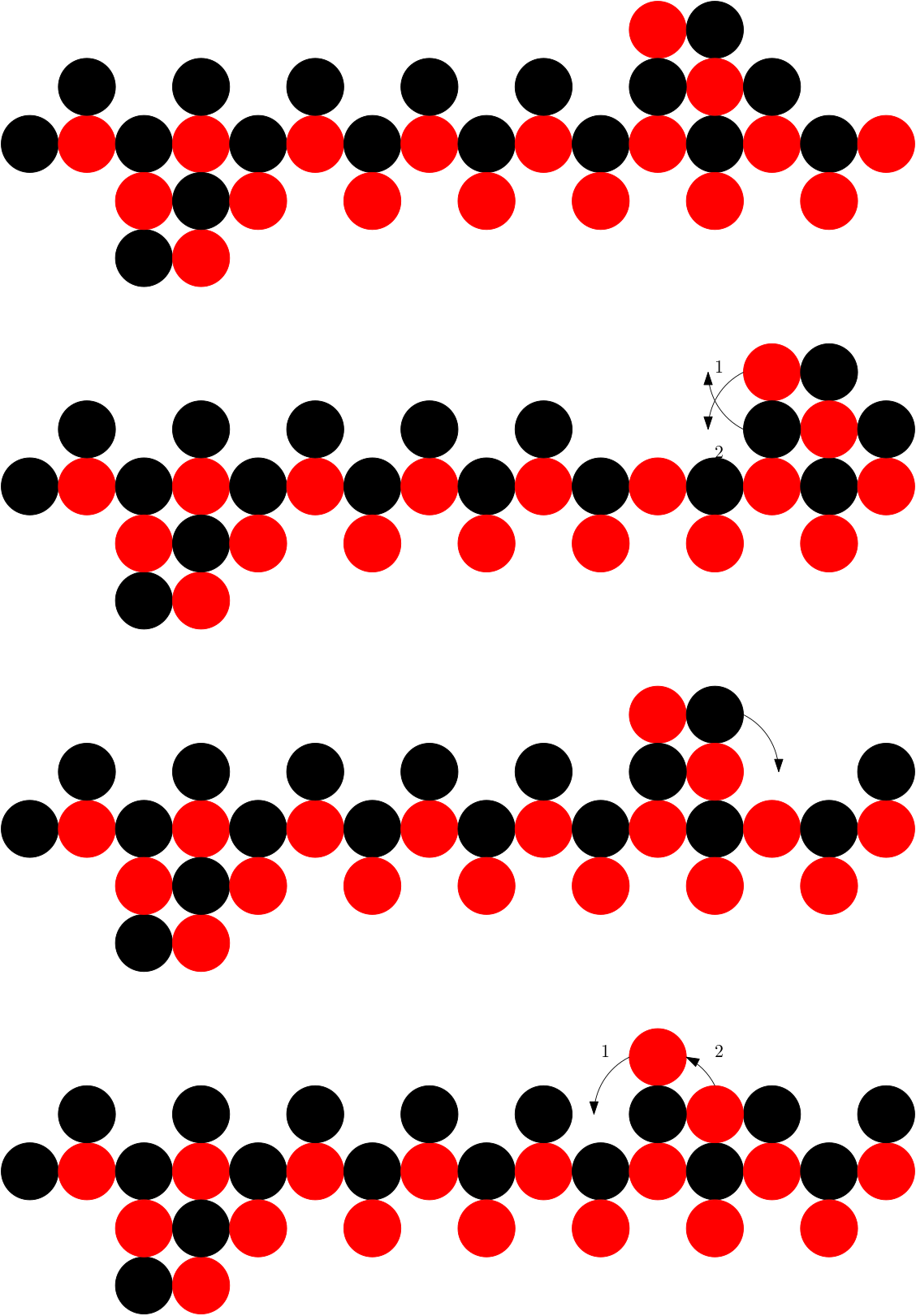}
  \label{fig:sub2}
  \caption{Placing the node.}
\end{subfigure}
\begin{subfigure}{.4\textwidth}
  \centering
  \includegraphics[width=.8\linewidth]{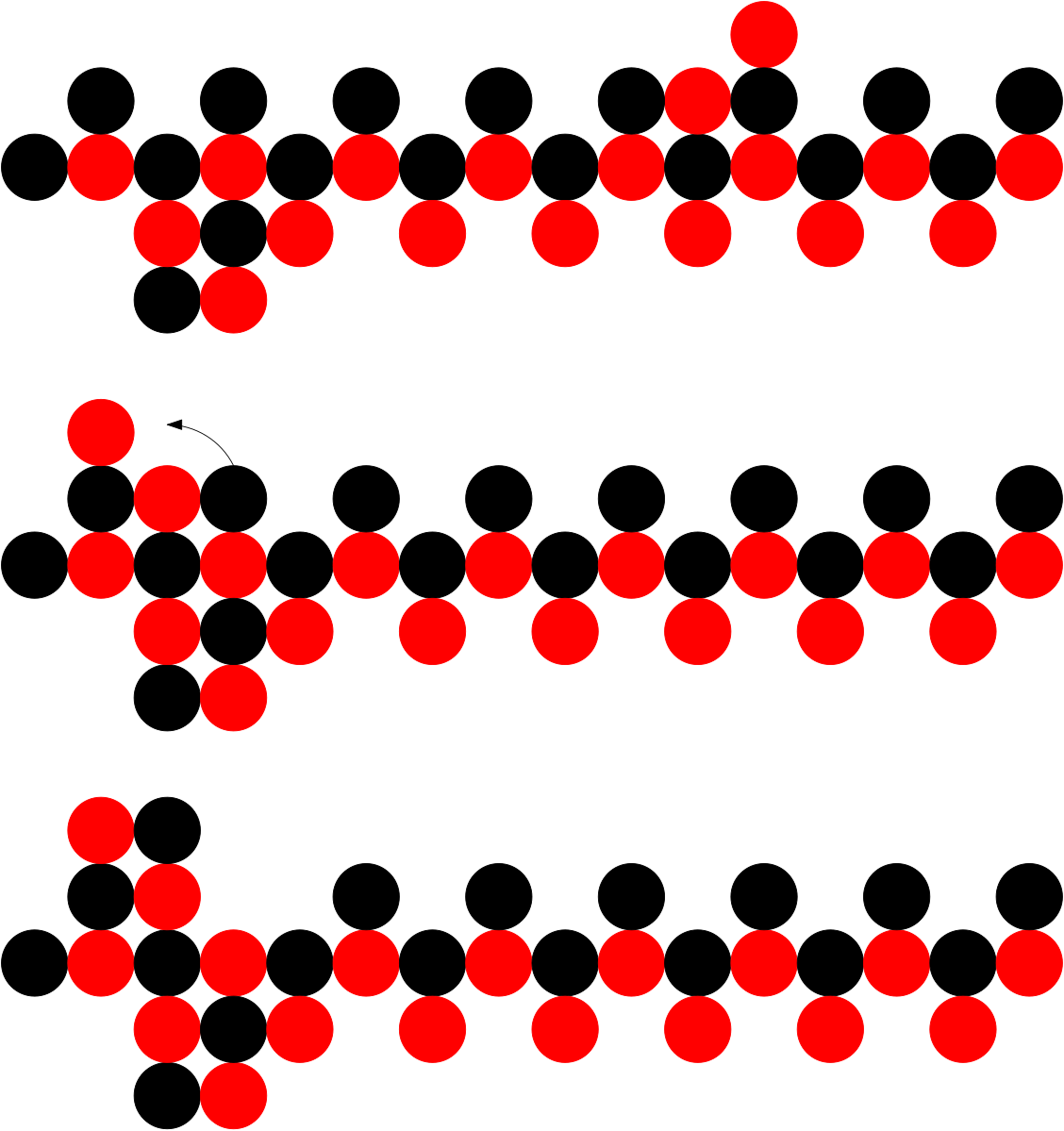}
  \label{fig:sub1}
  \caption{Returning.}
\end{subfigure}
\caption{The case where the colours are the same.}
\label{fig:test}
\end{figure}

\FloatBarrier

\begin{figure}
\centering
\begin{subfigure}{.4\textwidth}
  \centering
  \includegraphics[width=.8\linewidth]{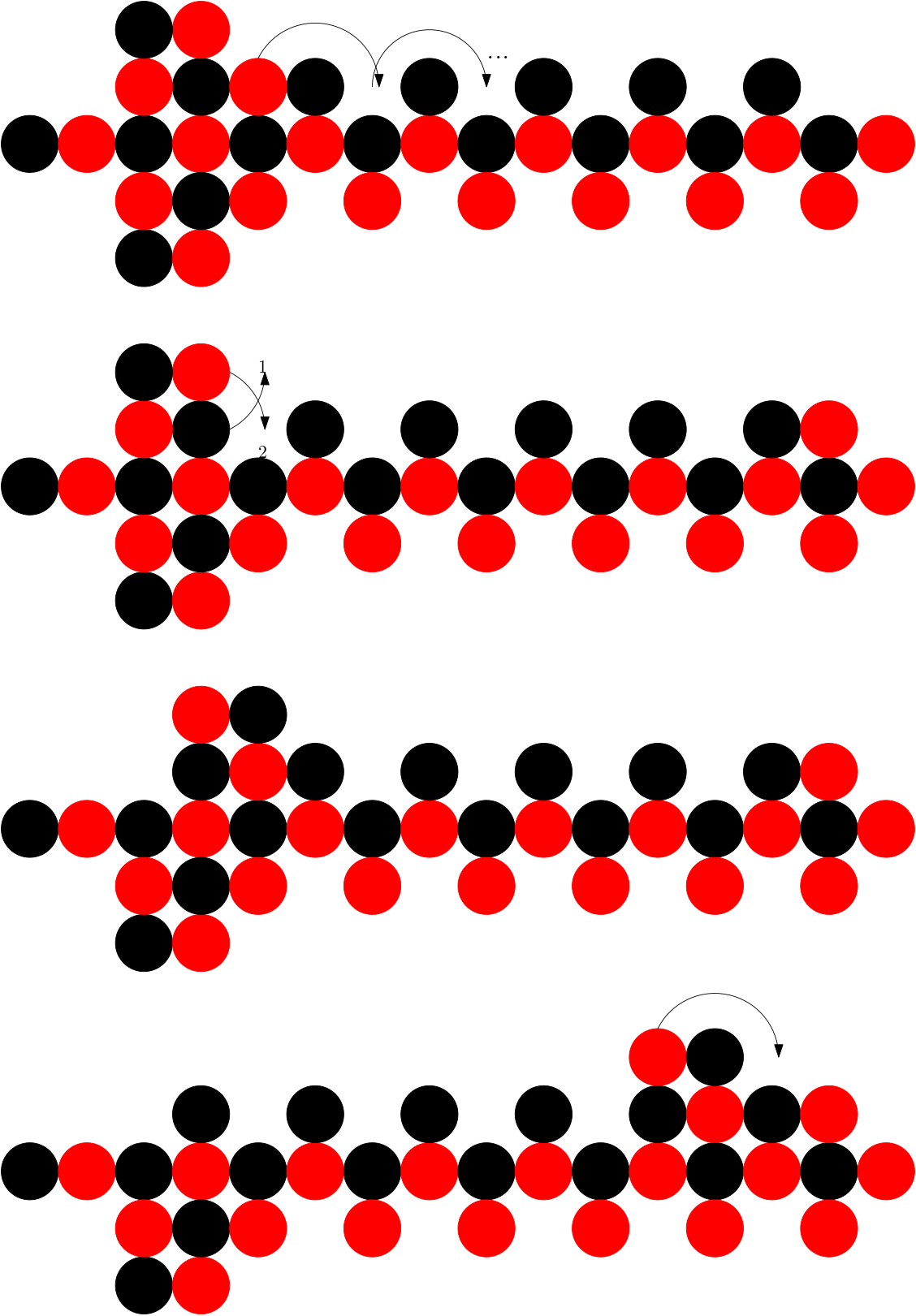}
  \label{fig:sub2}
  \caption{Moving the node, and then the rest of the builder.}
\end{subfigure}
\begin{subfigure}{.4\textwidth}
  \centering
  \includegraphics[width=.8\linewidth]{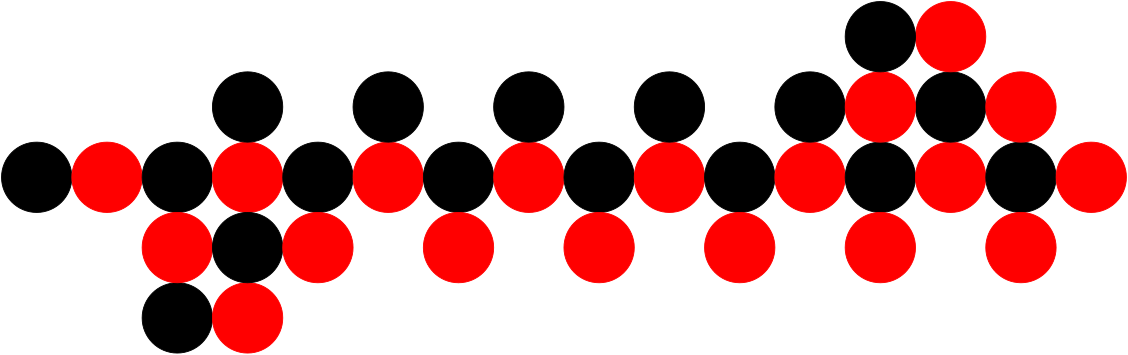}
  \label{fig:sub1}
  \caption{The result.}
\end{subfigure}
\caption{The case where the colours are different.}
\label{fig:test}
\end{figure}

\FloatBarrier

\begin{figure}
\centering
\begin{subfigure}{.5\textwidth}
  \centering
  \includegraphics[width=.8\linewidth]{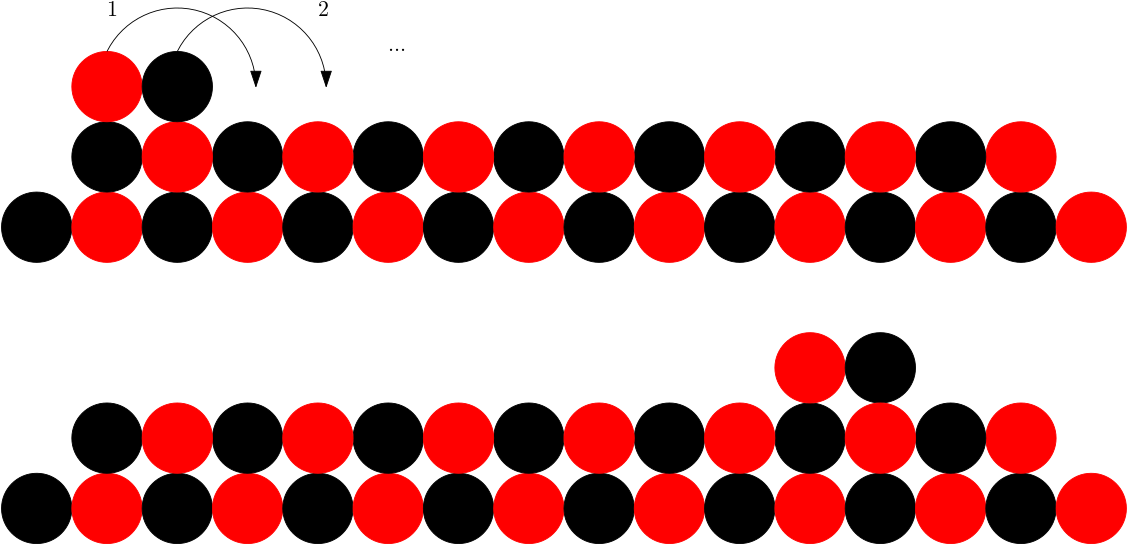}
  \label{fig:sub2}
\end{subfigure}
\caption{The case where the colours alternate.}
\label{fig:test}
\end{figure}

\FloatBarrier

We are now in the position to prove our main result, that it is possible to construct any nice shape from any other nice shape using a seed of size 4.
Let $N_n$ be the subclass of nice shapes which is colour consistent to a line of length $n$.

\begin{theorem}
A line of length $n$ can be transformed to any given nice shape $N_{n-1}$ using a 3-node seed in $O(n^2)$ time.
\end{theorem}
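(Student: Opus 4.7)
The plan is to reuse the construction of Theorem 3 almost verbatim, but to interleave it with a preliminary \emph{foundation} phase that fixes the starting colour of each vertical column in advance, and then to use the foundation-crossing manoeuvres of Lemma 11 whenever a builder must travel past a foundation node to deposit at the far end of a column.

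First I would place the 3-node seed at $(1,1),(2,1),(3,1)$ and invoke RaiseNodes twice followed by MirrorSeed (Lemmas 4 and 5) to create two $5$-node builders, one above and one below the horizontal line $L$. This gives exactly the starting configuration of Theorem 3, so $L$ plays the role of the central line of the target and each builder is responsible for the columns on its side. Next, before building any full column, I would have each builder sweep along $L$ once and drop a single foundation node above (respectively below) each $L_i$ whose target column starts with a colour that an uncorrected alternating construction would get wrong; by Lemma 10 only a constant number of parity defects can arise, and by Lemma 7 a builder can always ``select'' the colour it deposits, so this foundation sweep is realisable.

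With the foundation in place, I would then proceed in phases $i = 0,1,\ldots,|L|-1$, building the vertical column over $L_i$ using DepositNode exactly as in Theorem 3. The only new complication is that the builder, while travelling from $L$'s end towards position $i$, may encounter a foundation node sitting above some $L_j$. Each such encounter falls into one of the three cases of Lemma 11 (matching-colour obstacle, mismatched-colour obstacle with the cell above free, or an alternating-colour blockade), and in each case Lemma 11 supplies a local manoeuvre that lets the builder deposit its carried node correctly and, on the return trip, restore a foundation configuration equivalent to the original. The two builders are scheduled as in Theorem 3: the lower one waits while the upper one completes its work and then shifts slightly to avoid momentary disconnection, and at the very end one builder places the nodes of the other and is discarded, yielding a waste of exactly one node as required for the target in $N_{n-1}$.

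The main obstacle I foresee is the bookkeeping required to show that the foundation is globally preserved across all phases: each individual crossing restores the foundation locally (Lemma 11), but one must argue that neither the partial columns already built nor the other builder ever disturb a foundation node that a later phase will rely on. I expect this to reduce to a straightforward invariant---that at the start of phase $i$ the foundation above $L_{i},L_{i+1},\ldots$ is untouched and connectivity to $L$ is maintained through the growing column---proved by induction on $i$ using the case analysis of Lemma 11 plus the scheduling argument of Theorem 3. For the running time, each of the $O(n)$ deposits costs $O(n)$ moves (Lemma 8), and each foundation crossing adds only $O(1)$ extra moves per obstacle and hence $O(n)$ per deposit in the worst case, so the overall bound remains $O(n^2)$.
\qed
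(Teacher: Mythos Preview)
Your outline is essentially the paper's own argument: set up the two builders as in Theorem~3, lay down a foundation layer first, then build the perpendicular columns phase by phase, invoking the foundation-crossing lemma whenever an obstacle is met, and discard the last builder for one node of waste. The order of foundation placement (the paper goes right to left) and the scheduling of the two builders match.

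There is, however, one substantive misreading. You invoke what you call Lemma~10 to conclude that ``only a constant number of parity defects can arise,'' i.e.\ that the foundation contains $O(1)$ nodes. That is not what the parity lemma (Lemma~9 in the paper) says: it asserts that among the odd-length perpendicular lines, at most one cannot be paired with another ending in the opposite colour. The paper uses this to guarantee that, while laying the foundation, the colour required next is always available from the horizontal line; it says nothing about how many foundation nodes there are, and indeed the foundation can have $\Theta(n)$ nodes. Your argument does not actually collapse because of this, since the colour-selection trick you cite from Lemma~7 already suffices to make each foundation deposit realisable, and $O(n)$ foundation deposits at $O(n)$ moves each stays within $O(n^2)$. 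But the sentence as written is wrong and should be replaced by the correct use of Lemma~9 (colour availability) rather than a bound on the number of defects. Relatedly, your ``Lemma~11'' for the three crossing cases is the paper's Lemma~10; the numbering is off by one from Lemma~9 onward.
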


\begin{proof}
The initial steps of the procedure are as in Theorem 3.
When we have created both builders, we then create the foundation by placing each node in the foundation from right to left.
We alternate between the builders as necessary.
By Lemma 9, we know that the scenario where the colours we need to place do not match what is available will never occur.
By Lemma 10, we know that the existence of the foundation does not impede construction.
We are then able to follow a procedure for constructing vertical lines as before.
Finally, the last builder is discarded as before.
\qed
\end{proof}

\begin{theorem}
A nice shape of $n$ nodes can be transformed to any given nice shape $N_n$ using a 4-node seed in $O(n^2)$ time.
\end{theorem}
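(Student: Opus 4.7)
The plan is to bootstrap Theorem~11 (line~$\to$ any nice shape in $N_{n-1}$ with a 3-node seed) into a nice-shape-to-nice-shape transformation by going through a canonical line, exactly as Theorem~4 bootstraps Theorem~3 on the restricted subclass $M_n$. Given a source shape $A \in N_n$ equipped with a 4-node seed, I would first reverse the construction of Theorem~11 to collapse $A$ back into a horizontal line of length $n$ accompanied by a 3-node seed, and then apply Theorem~11 in the forward direction to build the target shape $B \in N_n$ from that line.

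The key ingredient is the reversibility of rotation movements, noted in Section~3: each individual step $C_i \to C_{i+1}$ used by RaiseNodes, MirrorSeed, and DepositNode admits an inverse rotation that preserves edge-connectivity because the ``after'' configuration has the same empty cells available as the ``before'' configuration. Consequently, the entire $O(n^2)$-step procedure built in Theorem~11 (including the foundation phase, the vertical-line construction phase, and the final discard of the leftover builder) can be executed backwards starting from the ``shape $+$ isolated 4-node builder'' configuration that the forward algorithm terminates in. Running the reversal of Theorem~11 on $A$ with the 4-node seed placed in the canonical discard position therefore yields the original line of length $n$ together with a 3-node seed. A second, forward application of Theorem~11 then produces any desired $B \in N_n$, again ending with a 4-node builder that is identified with the output seed.

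The running time is $O(n^2)$ per phase by Theorem~11 (and Lemma~8), so the total is $O(n^2)$. Connectivity is preserved throughout because each phase preserves it individually and the intermediate line plus 3-node seed is a connected configuration.

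The main obstacle is ensuring that the ``canonical discard position'' of the 4-node builder really is the same configuration at both ends: the reversal of the forward algorithm can only be initiated from the exact connected 4-node shape, in the exact relative placement with respect to the nice shape, that the forward algorithm produces when it terminates. I would handle this by fixing once and for all the geometry of the terminal 4-node cluster (the mirror builder created by MirrorSeed, shifted to its final resting cells adjacent to the last-built vertical line), and by specifying that the 4-node input seed of Theorem~5 is placed in precisely those cells. With that placement the first reversed rotation is immediately legal, and every subsequent reversed rotation is legal by the reversibility of the forward step it inverts, so the chain of reversals goes through without additional case analysis.
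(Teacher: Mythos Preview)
Your proposal is correct and follows exactly the paper's argument: reverse the line-to-nice-shape construction (the paper's Theorem~5) using reversibility of rotations, with the 4-node seed standing in for the discarded builder, then run the forward construction from the resulting line to the target shape. The paper's own proof is three terse sentences that omit your discussion of the canonical discard position; that placement issue is a reasonable elaboration of the same approach rather than a departure from it.
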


\begin{proof}
By Theorem 5, we can construct a nice shape from a line using a 3-node seed with 1 node as waste.
By reversibility, we can start with a 4 node seed and construct a line of length $n$.
It is then possible to construct another nice shape using the line.
\qed
\end{proof}

\section{Conclusions}\label{sec6}
Some open problems follow from the findings of our work. The most obvious is expanding the class of shapes which can be constructed using minimal seeds to those which can be derived from nice shapes. This could possibly be expanded by transferring nodes along the perimeter of a nice shape with the help of bridging nodes or by compressing them. In the long run this could lead to characterisations of the classes of connectivity-preserving shapes which can be constructed using only rotation for a given seed. Another important question is the impact that switching to a decentralised model of transformations will have on the results, especially because most programmable matter systems which model real-world applications implement programs in this way. This in turn could lead to real-world applications for the efficient transformation of programmable matter systems.

\bibliographystyle{ieeetr}
\bibliography{newpaper}

\end{document}